\documentclass[11pt, letterpaper]{article}
\usepackage[margin=1in]{geometry}

\usepackage{amsfonts}
\usepackage{amsmath}
\usepackage{amssymb}
\usepackage{amsthm}
\usepackage{bbm} 
\usepackage{cite}
\usepackage{color}
\usepackage{graphicx}
\usepackage{mathrsfs} 
\usepackage[normalem]{ulem}
\usepackage{wrapfig}



\def\extraspacing{\vspace{3mm} \noindent}

\def\figcapdown{\vspace{-0mm}}

\def\vgap{\vspace{1mm}}



\theoremstyle{plain}
\newtheorem{theorem}{Theorem}[section]

\newtheorem{lemma}[theorem]{Lemma}

\theoremstyle{definition}

\theoremstyle{remark}

\newcommand{\minipg}[2]{\begin{center}\begin{minipage}{#1}#2\end{minipage}\end{center}}
\newcommand{\myitems}[1]{\begin{itemize} #1 \end{itemize}}
\newcommand{\myenums}[1]{\begin{enumerate} #1 \end{enumerate}}


\newcommand{\bm}[1]{\textrm{\boldmath${#1}$}}

\newcommand{\myeqn}[1]{\begin{eqnarray}#1\end{eqnarray}}

\newcommand{\set}[1]{\{#1\}}

\def\mit{\mathit}

\def\eps{\epsilon}
\def\fr{\frac}
\def\-{\mbox{-}}

\def\real{\mathbb{R}}

\def\tO{\tilde{O}}
\def\tOmega{\tilde{\Omega}}

\def\lc{\lceil}

\def\rc{\rceil}

\def\nn{\nonumber}








\usepackage{balance}
\usepackage{cite}
\usepackage{mathtools}
\usepackage{microtype}

\def\vgap{\vspace{2mm}}
\def\figcapdown{\vspace{0mm}}
\def\minipgwidth{0.95\linewidth}

\def\A{\mathcal{A}}
\def\B{\mathcal{B}}
\def\C{\mathcal{C}}
\def\E{\mathcal{E}}
\def\G{\mathcal{G}}
\def\I{\bm{\mathcal{I}}}
\def\R{\mathcal{R}}
\def\U{\mathcal{U}}
\def\X{\mathcal{X}}
\def\agm{\mathrm{AGM}}
\def\dom{\mathit{dom}}

\def\join{\mathit{join}}
\def\no{\mathit{no}}
\def\out{\mathrm{OUT}}
\def\path{\rightsquigarrow}
\def\raw{\mathit{raw}}
\def\scheme{\mathit{scheme}}
\def\size{\mathit{size}}
\def\tri{\mathit{triangle}}
\def\wedge{\mathit{wedge}}
\def\weight{\mathit{weight}}
\def\yes{\mathit{yes}}
\title{Space-Query Tradeoffs in Range Subgraph Counting and Listing} 

\author{
	Shiyuan Deng, Shangqi Lu, Yufei Tao \\[3mm]
	Department of Computer Science and Engineering \\
	Chinese University of Hong Kong \\
	Hong Kong, China \\
	{\em \{sydeng,sqlu,taoyf\}@cse.cuhk.edu.hk}
}

\begin{document}

\maketitle

\begin{abstract}
    This paper initializes the study of {\em range subgraph counting} and {\em range subgraph listing}, both of which are motivated by the significant demands in practice to perform graph analytics on subgraphs pertinent to only selected, as opposed to all, vertices. In the first problem, there is an undirected graph $G$ where each vertex carries a real-valued attribute. Given an interval $q$ and a pattern $Q$, a query counts the number of occurrences of $Q$ in the subgraph of $G$ induced by the vertices whose attributes fall in $q$. The second problem has the same setup except that a query needs to enumerate (rather than count) those occurrences with a small delay. In both problems, our goal is to understand the tradeoff between {\em space usage} and {\em query cost}, or more specifically: (i) given a target on query efficiency, how much pre-computed information about $G$ must we store? (ii) Or conversely, given a budget on space usage, what is the best query time we can hope for? We establish a suite of upper- and lower-bound results on such tradeoffs for various query patterns. 
\end{abstract}

\vspace{50mm}
This research was supported in part by GRF Projects 14207820, 14203421, and 14222822 from HKRGC.

\pagebreak

\section{Introduction} \label{sec:intro} 

Consider $G = (V, E)$ as a {\em data graph} and $Q$ as a {\em pattern graph}. A subgraph of $G$, if isomorphic to $Q$, is said to be an {\em occurrence} of $Q$. The goal of {\em pattern searching} is to either list the occurrences of $Q$ or to count the number of them. Both are fundamental problems in computer science and have attracted considerable attention in the past few decades.     

\vgap

This paper studies pattern searching in vertex-induced subgraphs. Here, a query selects a subset $U \subseteq V$ of vertices and needs to count/list the occurrences of $Q$ in $G'$, where $G'$ is the subgraph of $G$ induced by $U$. Note that if an occurrence uses any vertex outside $U$, the occurrence should not be counted/listed. Trivially, one can answer the query by first generating $G'$ and then counting/listing $Q$ in $G'$ ``from scratch'', but this does not leverage the power of {\em preprocessing}. Instead, our goal is to store $G$ in a data structure that can answer all queries with non-trivial guarantees. It is intriguing to investigate how much we can minimize the query time subject to a space budget, and conversely, how much space we must consume to achieve a target query time. 

\vgap 

Vertex selection in database systems is done with a predicate $q$, which determines $U$ as $\set{v \in V \mid \textrm{$v$ satisfies $q$}}$. Concentrating on {\em range predicates}, the problems we consider are:

\minipg{\minipgwidth}{
	{\bf Problem 1 (Range Subgraph Counting).}
	$G = (V, E)$ is an undirected graph where each vertex $v \in V$ carries a real-valued {\em attribute} $A_v$. For an interval $q = [x_1, x_2]$, define $V_q = \set{v \in V \mid x_1 \le A_v \le x_2}$ and $G_q$ as the subgraph of $G$ induced by $V_q$. Let $Q$ be a connected (only one connected component) pattern graph with $O(1)$ vertices. Given an interval $q$, a query returns the number of occurrences of $Q$ in $G_q$. The pattern $Q$ is fixed for all queries.
} 

\minipg{\minipgwidth}{	
	{\bf Problem 2 (Range Subgraph Listing).}
	Same setup except that a query reports the occurrences of $Q$ in $G_q$. 
}

\extraspacing {\bf Universal Notations.} Several notations will apply throughout the paper. Set $n = |V|$ and $m = |E|$. Symbol $\omega < 2.37286$ \cite{aw21} represents the matrix multiplication exponent. The notations $\tO(.)$ and $\tOmega(.)$ hide a factor polylogarithmic to the underlying problem's parameters.

\subsection{Motivation} \label{sec:intro:motivation}

{\bf Practical Applications.} Subgraph patterns are important for understanding the characteristics of a data graph $G$, as has been documented in a long string of papers, e.g., \cite{ayz97,bkk17,bps21,cn85,cdm17,fkll15,flr+12,gkw19,kkm00,ww13,ayz95,bpwz14,e94,e99,els10,hkss13,knrr15,np85}. In practice, analysts are interested in not only patterns from the whole $G$ but also those pertinent only to selected vertices. Consider a social network $G$ where each vertex represents an individual. A graph's {\em clustering coefficient} \cite{ws98}, a popular measurement in network science, is the ratio between the number of triangles (3-cliques\footnote{An {\em $\ell$-clique} is a clique with $\ell$ vertices.}) and the number of wedges (2-paths\footnote{An {\em $\ell$-path} is a path with $\ell$ edges.}). The coefficient of $G$, however, is just a single value revealing little about the features of specific demographic groups. It is more informative to, for example, compare the coefficients of (i) the subgraph of $G$ induced by people with ages $\in [20, 30]$, and (ii) that induced by age $\in [60, 70]$. A step further, by putting together the coefficients induced by ``age $\in [i \cdot 10, (i+1) \cdot 10]$'' for each $i \in [1, 10]$, one obtains an interesting comparison across different age groups. Refined analysis can then concentrate on the pattern occurrences of a target group. The power of the above analysis owes to queries of Problem 1 and 2 with {\em arbitrary} selection ranges. Designing effective data structures is essential to avoid lengthy response time.

\extraspacing {\bf Importance of Space-Query Tradeoffs.} One should not confuse the space-query tradeoff with the tradeoff between {\em preprocessing time} and {\em query cost}, as has been extensively studied on join algorithms \cite{bdg07,ck21,d20,dg07,dk18,dss14,ks13,oz15,s15,ssv18,sv17}. Both tradeoffs are important, but they matter in different ways. Unlike preprocessing time, which is ``one-time cost'' (because a structure, once built, can be used forever), the space consumption is permanent. In other words, the space-query tradeoff has a (much) more durable effect on the underlying database system. However, in spite of their importance, the space-query tradeoffs on joins have received surprisingly little attention: we are aware of only a single paper \cite{dk18}, which, as will be discussed in Section~\ref{sec:related}, does not consider query predicates (or equivalently, only one query, which always outputs the entire join, exists) and concerns only reporting (but not counting). Our work can be thought of as a step in the same direction as \cite{dk18} because, as explained in Section~\ref{sec:prob2-arbitrary}, subgraph searching can be cast as a join problem (in fact, some of our results are explicitly about joins), and actually the first step on predicate-driven queries and counting.

\vgap

Finally, it is worth mentioning that a useful structure, no matter how little space it occupies, must be constructible in polynomial time. This is true for all the structures developed in our paper. In fact, each of our structures can be built with at most the time needed to find all the occurrences of the query pattern $Q$, ignoring polylog factors. 

\begin{table*}
    \begin{center}
    \begin{small}
    \begin{tabular}{c|c|c|c|c} 
        {\bf Problem} & {\bf Pattern} $Q$ & {\bf Space} & {\bf Query} & {\bf Remark} \\ 
        \hline\hline 
        1 (cnt) & any fixed $Q$ & $O(n^2)$ & $\tO(1)$ & near optimal$^\dag$ \\
        \hline 
        1 & wedge & $\tO(m^2/\lambda^2)$ & $\tO(\lambda)$ & for any $\lambda \in [1, \sqrt{m}]$, near optimal$^\dag$ \\ 
        \hline 
        1 (lower & wedge & $\tO(m^{2-\delta}/\lambda^2)$ & $\tO(\lambda)$ & for $\lambda \in [1, \sqrt{m}]$ and any $\delta > 0$, \\[-1mm] 
        bound) && impossible && subj.\ to strong set disjointness conj.\ \\
        \hline
        1 & $\ell$-clique & $O(m)$ & $\tO(1)$ & \\
        \hline\hline 
        2 (rep) & any fixed $Q$ & $\tO(m+m^{\rho^*}/\Delta)$ & delay $\tO(\Delta)$ & for any $\Delta \ge 1$,  \\[0mm]
        &&&& ${\rho^*} =$ frac.\ edge covering num.\ of $Q$ \\
        \hline
        2 & triangle & $O(m)$ & delay $\tO(1 + (m^*)^{\fr{\omega-1}{\omega+1}})$ & $m^*=$ num.\ of edges in \\[0mm]
        &&& & at least one triangle in $G_q$ \\
        \hline
        2 & $\ell$-star & $O(m)$ & delay $\tO(1)$ & near optimal \\ 
        \hline
        2 & $2\ell$-cycle & $\tO(\#P_\ell)$ & delay $\tO(1)$ & $\#P_\ell =$ num.\ of $\ell$-paths in $G$ 
    \end{tabular}
    \end{small}
    \end{center}
    \vspace{-3mm}
    {\small Remark: ``near optimal'' means no polynomial improvement (i.e., $n^\delta$ for arbitrary small constant $\delta > 0$) possbile. The near optimality marked with $\dag$ is subject to the strong set disjointness conjecture.} 
    \caption{A summary of our results} 
    \label{tab:results}
    \figcapdown
\end{table*}

\subsection{Our Contributions}  

Table~\ref{tab:results} summarizes the main results of this paper. Next, we will explain the results in detail. 

\subsubsection{Problem 1} \label{sec:ours1}

\noindent {\bf Wedges.} We will show: 

\begin{theorem} \label{thm:prob1:wedge}
	Consider Problem 1 with $Q =$ wedge. For any real value $\lambda \in [1, \sqrt{m}]$, there is a structure of $\tO(m^2/\lambda^2)$ space that answers a query in $\tO(\lambda)$ time. 
\end{theorem}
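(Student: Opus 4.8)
The plan is to answer a wedge query by $O(1)$ lookups in a precomputed table of ``block-aligned'' sub-answers, plus $\tO(\lambda)$ work to patch up the two ends of the query interval. First I would throw away the isolated vertices (they lie on no wedge), so that $n\le 2m$, and sort the surviving vertices by attribute as $v_1,\dots,v_n$. Then any query interval $q$ becomes a contiguous range $[i,j]$ of this order (found by binary search), and the answer is $W(i,j):=\sum_{b=i}^{j}\binom{d(b;i,j)}{2}$ where $d(b;i,j):=|N(b)\cap\{v_i,\dots,v_j\}|$, because a wedge (a $2$-path) has a unique degree-$2$ vertex $b$ and is specified by the unordered pair of distinct neighbours of $b$ it uses, all of which must fall in $[i,j]$.

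Next I would cut $v_1,\dots,v_n$ into $g=O(m/\lambda)$ consecutive \emph{blocks}: every vertex of degree $>\lambda$ is made its own block, and each maximal run of the remaining (``light'') vertices is cut greedily into blocks whose total incident degree lies in $[\lambda,2\lambda]$ (the last block of a run may be lighter). There are $\le 2m/\lambda$ heavy singletons and $\le 2m/\lambda$ ``full'' light blocks, so indeed $g=O(m/\lambda)$; crucially, a block that is not a heavy singleton has total incident degree $O(\lambda)$. The structure stores: (i) $W(\alpha,\beta)$ for every pair of block boundaries with $[\alpha,\beta]$ a union of whole blocks --- $O(g^2)=O(m^2/\lambda^2)$ numbers; (ii) for each vertex $v$, the sorted list of the positions of its neighbours --- $O(m)$ in total; and (iii) position$\to$block and block$\to(\text{first},\text{last})$ arrays --- $O(n)$.

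To answer $[i,j]$, let $[\alpha,\beta]$ be the largest union of whole blocks inside $[i,j]$ (if there is none, then $[i,j]$ meets at most two blocks, at most one a heavy singleton, and a direct computation touching $O(\lambda)$ edges settles it). A centre $b\in[i,j]$ either has $p_b\in[\alpha,\beta]$ (lies in a whole block) or lies in one of the two partial boundary blocks, which --- being properly truncated --- are light. There are $O(\lambda)$ centres of the latter kind; for each I compute $d(b;i,j)$ by two binary searches in $b$'s neighbour-position list, in $O(\log n)$ time. For a centre with $p_b\in[\alpha,\beta]$, write $d(b;i,j)=D_b+\epsilon_b+\eta_b$ with $D_b=|N(b)\cap[\alpha,\beta]|$, $\epsilon_b=|N(b)\cap[i,\alpha-1]|$, $\eta_b=|N(b)\cap[\beta+1,j]|$, and expand $\binom{D_b+\epsilon_b+\eta_b}{2}$ into $\binom{D_b}{2}+\binom{\epsilon_b}{2}+\binom{\eta_b}{2}+D_b\epsilon_b+D_b\eta_b+\epsilon_b\eta_b$, summed over all such $b$. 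The sum of the first term is exactly $W(\alpha,\beta)$, read from table (i) in $O(1)$. Each of the other five terms is supported on the $b$'s incident to a partial boundary block, and those blocks have only $O(\lambda)$ incident edges in total; so I enumerate those $O(\lambda)$ edges to recover the relevant $\epsilon_b$ and $\eta_b$, fetch any needed $D_b$ (or the missing one of $\epsilon_b,\eta_b$) by a binary search in $O(\log n)$ time, and add up. The query cost is $O(\lambda\log n)=\tO(\lambda)$, and the space is $O(m^2/\lambda^2+m)=O(m^2/\lambda^2)$ because $\lambda\le\sqrt m$.

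The step I expect to be the crux is the boundary patching: one must verify that a \emph{partially} truncated block is never a heavy singleton (this is exactly why heavy vertices are isolated into their own blocks), and that after subtracting $W(\alpha,\beta)$ every residual cross-term is confined to the $O(\lambda)$ edges meeting a boundary block --- even though the number of wedges created or destroyed by shrinking $[\alpha,\beta]$ down to $[i,j]$ can be as large as $\Theta(\lambda^2)$, we never enumerate those wedges, only the $O(\lambda)$ boundary edges, and reconstruct the counts combinatorially. Everything else is bookkeeping, including checking that table (i) is computable in polynomial time --- e.g.\ fix each left boundary $\alpha$ and sweep $\beta$ rightward one block at a time, maintaining the vector $(|N(b)\cap[\alpha,\beta]|)_b$ and the running value of $\sum_{p_b\in[\alpha,\beta]}\binom{\cdot}{2}$, for a total of $O(mg)=O(m^2/\lambda)$ time.
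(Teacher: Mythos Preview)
Your proposal is correct and takes a genuinely different route from the paper's proof.

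The paper does not precompute answers on block-aligned intervals. Instead, it introduces an auxiliary \emph{colored range wedge counting} problem, and reduces that in turn to a \emph{weighted set intersection size} problem. Concretely, it takes a canonical collection $\C$ of $\tO(1)$-overlapping vertex subsets (from a BST on attribute values); for each $U\in\C$ it forms the weighted set $S_U$ of black centres adjacent to $U$, and then applies the standard large/small dichotomy at threshold $\lambda$ on these sets, storing $\size(S_U,S_{U'})$ for every pair of large sets. A query decomposes $V_q$ into $\tO(1)$ canonical pieces and sums $\size(S_{U_i},S_{U_j})$ over all pairs. So the heavy/light split in the paper is on \emph{sets} $S_U$, not on vertices, and the precomputed $\tO(m^2/\lambda^2)$ numbers are pairwise weighted intersection sizes rather than wedge counts on sub-intervals.

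Your approach is more elementary and self-contained: a single table $W(\alpha,\beta)$, a direct block decomposition of the sorted order, and the algebraic expansion of $\binom{D_b+\epsilon_b+\eta_b}{2}$ to confine the boundary correction to the $O(\lambda)$ edges touching the two partial blocks. The key insight you correctly isolate --- that a \emph{properly} truncated block cannot be a heavy singleton --- is what makes the patch-up cheap. The paper's approach, by contrast, is more modular: the weighted set intersection structure is a clean standalone lemma, and its formulation makes the tie to the set-disjointness lower bound (Theorem~\ref{thm:prob1:wedge-lb}) immediate. Both arrive at the same $\tO(m^2/\lambda^2)$ space and $\tO(\lambda)$ query time.
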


The space-query tradeoff may look disappointing. After all, wedge counting is easy in {\em one-off} computation: we can count the number of wedges in $G$ using $O(n+m)$ time. It is natural to wonder whether the space in Theorem~\ref{thm:prob1:wedge} is necessary. We answer the question by showing that any substantial improvement to Theorem~\ref{thm:prob1:wedge} will yield a major breakthrough on {\em set disjointness}:

\minipg{\minipgwidth}{
	{\bf Set Disjointness.}
	The data is a collection of $s \ge 2$ sets $S_1$, $S_2$, $...$, $S_s$. Given distinct set ids $a, b \in [1, s]$, a query returns whether $S_a \cap S_b$ is empty. 
}

\noindent Let $N = \sum_{i=1}^s |S_i|$ be the input size of set disjointness. Given any $\lambda \in [1, \sqrt{N}]$, there is a simple structure of $O(N^2 / \lambda^2)$ space with $O(\lambda)$ query time (see Appendix~\ref{app:prob1-wedge:color}). Improving the tradeoff by a polynomial factor even for one arbitrary $\lambda$ has been a long-standing open problem. The {\em strong set disjointness conjecture} \cite{gklp17,glp19} states that a structure with query time $\lambda$ must use $\tOmega(N^2 / \lambda^2)$ space for any $\lambda \ge 1$. We will prove:

\begin{theorem} \label{thm:prob1:wedge-lb} 
	Consider Problem 1 with $Q=$ wedge. Fix any $\lambda \in [1, \sqrt{m}]$ and any constant $\delta > 0$. Suppose that we can obtain a structure of $\tO(m^{2-\delta}/\lambda^2)$ space answering a query in $\tO(\lambda)$ time. Then, for any set disjointness input of size $N$, we can obtain a structure of $\tO(N^{2-\delta}/\lambda^2)$ space answering a query in $\tO(\lambda)$ time (thus breaking the strong set disjointness conjecture).
\end{theorem}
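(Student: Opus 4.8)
The plan is to reduce set disjointness to range wedge counting with only a constant-factor blow-up in input size and only $O(1)$ range-wedge queries per set-disjointness query, so that the hypothetical structure in the premise transfers essentially verbatim. Given sets $S_1,\dots,S_s$ over a universe $U$ with $\sum_i|S_i|=N$, I would build a bipartite data graph $G$ as follows: on one side, for each set id $i$ create two vertices $t_i^{+}$ and $t_i^{-}$ carrying the real attributes $+i$ and $-i$; on the other side, for each element $e\in U$ create a single vertex $p_e$ with attribute $0$; add edges $t_i^{+}p_e$ and $t_i^{-}p_e$ whenever $e\in S_i$. Then $m=|E|=2N$ and $|V|=2s+|U|=O(N)$ (empty sets can be dropped), and $G$ is bipartite, hence triangle-free, so the number of occurrences of the wedge in any induced subgraph $G_q$ equals $\sum_{v\in G_q}\binom{\deg_{G_q}v}{2}$. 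Since a set id names a vertex purely through its attribute ($t_i^{\pm}$ has attribute $\pm i$, $p_e$ has attribute $0$), once the premised range-wedge structure is built on $G$ we may discard $G$ and still turn any set-disjointness query into range-wedge queries with $O(1)$ arithmetic and no auxiliary storage.

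To answer a set-disjointness query $(a,b)$, by symmetry assume $a<b$ (so $a\ge1$, $b\ge2$). For an interval $q=[-a',b']$, the induced subgraph $G_q$ retains every $p_e$, retains $t_i^{+}$ exactly for $i\le b'$, and retains $t_i^{-}$ exactly for $i\le a'$; hence $\deg_{G_q}p_e=d_e^{\le b'}+d_e^{\le a'}$ with $d_e^{\le k}:=|\{i\le k:e\in S_i\}|$, while every surviving $t_i^{\pm}$ has degree $|S_i|$, so the wedge count of $G_q$ equals $\sum_e\binom{d_e^{\le b'}+d_e^{\le a'}}{2}+\phi(b')+\psi(a')$ for functions $\phi,\psi$ depending on a single endpoint. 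I would then issue the four queries $q\in\{[-a,b],\,[-(a{-}1),b],\,[-a,b{-}1],\,[-(a{-}1),b{-}1]\}$ (all legal; when $a=1$ the interval $[-(a{-}1),\cdot]=[0,\cdot]$ is harmless), obtain counts $W_1,W_2,W_3,W_4$, and compute $W_1-W_2-W_3+W_4$. The single-endpoint terms $\phi,\psi$ cancel in this mixed difference, and with $D:=d_e^{\le a-1}+d_e^{\le b-1}$, $\alpha_e:=\mathbbm{1}[e\in S_a]$, $\beta_e:=\mathbbm{1}[e\in S_b]$ the surviving part is $\sum_e\big(\binom{D+\alpha_e+\beta_e}{2}-\binom{D+\beta_e}{2}-\binom{D+\alpha_e}{2}+\binom{D}{2}\big)=\sum_e\alpha_e\beta_e=|S_a\cap S_b|$, because the mixed difference annihilates the linear part of $\binom{x}{2}$ and sends its quadratic part $x^2/2$ to $\alpha_e\beta_e$. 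Thus the query reports ``disjoint'' iff $W_1-W_2-W_3+W_4=0$.

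Feeding $G$ to the premised structure then yields a set-disjointness structure of space $\tO(m^{2-\delta}/\lambda^2)=\tO(N^{2-\delta}/\lambda^2)$ with query time $4\cdot\tO(\lambda)+O(1)=\tO(\lambda)$; since space $\times$ (query time)$^2=\tO(N^{2-\delta})=o(N^2)$, this contradicts the strong set disjointness conjecture, establishing Theorem~\ref{thm:prob1:wedge-lb}. The crux — and the step I expect to be the real obstacle — is the mismatch that a range predicate is a single interval whereas a set-disjointness query names two arbitrary sets: the resolution is to place $t_i^{+}$ and $t_i^{-}$ on opposite sides of $0$ so that one interval $[-a',b']$ simultaneously cuts a prefix $\{t_i^{+}:i\le b'\}$ and a prefix $\{t_i^{-}:i\le a'\}$, and to let the shared neighbor $p_e$ see both families so that $\binom{\deg p_e}{2}$ carries the cross term $d_e^{\le b'}\cdot d_e^{\le a'}$ that the four-query inclusion--exclusion isolates while cancelling all lower-order ``garbage.'' Once this is in place, the remaining verifications (boundary cases, the one-endpoint nature of the $t$-centered wedges, triangle-freeness so that ``occurrence of the wedge'' is exactly $\sum_v\binom{\deg v}{2}$) are routine.
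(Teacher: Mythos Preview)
Your proposal is correct and is essentially the paper's own argument: the paper also builds a bipartite graph with two set-vertices per $S_i$ (at attributes $i$ and $s+i$, with all element vertices at $s+\tfrac12$, versus your $\pm i$ and $0$), and answers a set-disjointness query by the same four-interval inclusion--exclusion $c_1-c_2-c_3+c_4=|S_a\cap S_b|$. The only cosmetic difference is that the paper's write-up splits the wedge count into ``e-s-e'' and ``s-e-s'' types before taking the alternating sum, whereas you take the second difference of $\binom{\deg p_e}{2}$ directly; both computations isolate exactly the cross term $\alpha_e\beta_e$.
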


\noindent {\bf Cliques.} We will show:  

\begin{theorem} \label{thm:prob1:clique}
	For Problem 1 with $Q =$ $\ell$-clique, there is a structure of $O(m)$ space answering a query in $\tO(1)$ time. 
\end{theorem}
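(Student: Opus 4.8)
The plan is to reduce the problem to a \emph{weighted two-dimensional dominance-sum} query on only $m$ points, exploiting the fact that whether a range query captures an $\ell$-clique $K$ depends solely on the two \emph{extreme} vertices of $K$, not on its $\ell-2$ interior vertices.

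First I would fix a total order $v_1, \ldots, v_n$ of $V$ consistent with the attribute values (ties broken by vertex id) and identify $v_k$ with its rank $k$. For any interval $q$, the vertex set $V_q$ is then a contiguous block $\{v_i, \ldots, v_j\}$, and the indices $i \le j$ are obtained by two predecessor searches over a sorted array of the attributes in $\tO(1)$ time; so it suffices to answer, given $i \le j$, the number of $\ell$-cliques in the induced subgraph $G[\{v_i, \ldots, v_j\}]$. The cases $j-i+1 < \ell$ (answer $0$) and $\ell = 1$ (answer $j-i+1$) are trivial, so assume $\ell \ge 2$.

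Next, for an $\ell$-clique $K$ of $G$ let $a(K) < c(K)$ be the smallest and largest ranks appearing in $K$. Since $K$ is a clique, $\{v_{a(K)}, v_{c(K)}\}$ is an edge of $G$, and the remaining $\ell-2$ vertices of $K$ lie in $N(v_{a(K)}) \cap N(v_{c(K)})$ with ranks strictly between $a(K)$ and $c(K)$. Hence, assigning to each edge $e = \{v_a, v_c\} \in E$ with $a < c$ the weight $t_\ell(e)$ equal to the number of $(\ell-2)$-cliques in $G[\,N(v_a) \cap N(v_c) \cap \{v_b : a < b < c\}\,]$ when $\ell \ge 3$, and $t_\ell(e) = 1$ when $\ell = 2$, we obtain a bijection between the $\ell$-cliques $K$ with $(a(K),c(K)) = (a,c)$ and these $(\ell-2)$-cliques. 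Every $\ell$-clique of $G$ is thus charged to exactly one edge, and $K$ lies in $G[\{v_i,\ldots,v_j\}]$ iff $a(K) \ge i$ and $c(K) \le j$. Therefore the query answer equals
\[ \sum_{\substack{e = \{v_a, v_c\} \in E \\ a < c,\ a \ge i,\ c \le j}} t_\ell(e), \]
a weighted dominance sum over the $m$ points $\{(a,c) : \{v_a,v_c\} \in E,\ a < c\}$, each carrying a nonnegative integer weight whose total, being the number of $\ell$-cliques of $G$, is at most $n^\ell$ and so fits in $O(1)$ words. I would store these points in a standard linear-space structure for two-dimensional dominance-sum queries, yielding $O(m)$ space and $\tO(1)$ query time. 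The weights $t_\ell(e)$ are computed during preprocessing by counting the $(\ell-2)$-cliques in each common-neighborhood set; since $\ell = O(1)$ this costs no more (up to polylog factors) than enumerating all $\ell$-cliques of $G$, so the structure is polynomial-time constructible.

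The step I expect to be the crux is the first conceptual move — recognizing that the $\ell-2$ interior vertices of a clique are immaterial to a range query, so that the $\Theta(m^{\ell/2})$ possible cliques collapse onto $m$ weighted points. After that the work is routine: plugging in an off-the-shelf weighted two-dimensional dominance-sum structure and checking the $(\ell-2)$-clique counts can be precomputed within the claimed time. The only minor technical care needed is keeping the dominance structure down to $O(m)$ \emph{words} (rather than $O(m\log m)$) while retaining polylogarithmic query time.
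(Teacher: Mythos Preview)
Your proposal is correct and is essentially the same argument as the paper's: register each $\ell$-clique at its pair of extreme (min/max-attribute) vertices, observe that for cliques this pair is always an edge so only $m$ weighted 2D points arise, and answer queries via a linear-space 2D range-sum (dominance-sum) structure with $\tO(1)$ query time. The paper presents this as a specialization of its general $O(n^2)$-space result for arbitrary $Q$, whereas you develop the clique case directly, but the key observation and the data structure are identical.
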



Counting triangles ($\ell = 3$) appears harder than counting wedges: in one-off computation, the fastest known algorithm for the former takes $O(m^{\fr{2 \omega}{\omega + 1}})$ time. It is thus surprising to see $Q=$ triangle easier than $Q =$ wedge in Problem 1.
From Theorem~\ref{thm:prob1:wedge} and \ref{thm:prob1:clique}, one sees that the problem of calculating the clustering coefficient (see Section~\ref{sec:intro:motivation}) of $G_q$ for any $q$ boils down to counting the wedges in $G_q$. Effectively, this implies optimal settlement of that problem (subject to the strong set disjointness conjecture), which bears practical significance due to the popularity of clustering coefficients. 

\extraspacing {\bf Arbitrary Subgraphs.} We will show:

\begin{theorem} \label{thm:prob1:gen}
	For any $Q$, there is a structure for Problem 1 that uses $O(n^2)$ space and answers a query in $\tO(1)$ time. 
\end{theorem}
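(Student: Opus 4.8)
\textit{Proof proposal.} The plan is simply to precompute the answer to every possible query, organised so that a query becomes a single table lookup. First I would pass to rank space: sort the vertices by attribute value (ties broken by vertex id) and let $\mathrm{rk}(v)\in[1,n]$ be the rank of $v$; store the sorted list of attributes. Since $V_q$ depends only on attributes, for a query interval $q=[x_1,x_2]$ two binary searches return indices $i\le j$ such that $V_q$ is exactly $\set{v : \mathrm{rk}(v)\in[i,j]}$ (with the convention that an empty range yields answer $0$). This costs $\tO(1)$ time, so it suffices to build, in $O(n^2)$ space, a table $F$ with $F[i][j]$ equal to the number of occurrences of $Q$ in the subgraph of $G$ induced by $\set{v : \mathrm{rk}(v)\in[i,j]}$; the query then returns $F[i][j]$.

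The key observation is that, because $G_q$ is a \emph{vertex-induced} subgraph, a subgraph $o$ of $G$ isomorphic to $Q$ is an occurrence of $Q$ in $G_q$ if and only if every vertex of $o$ belongs to $V_q$. Writing $\mathrm{lo}(o)$ and $\mathrm{hi}(o)$ for the smallest and the largest rank among the $O(1)$ vertices of $o$, this happens exactly when $i\le\mathrm{lo}(o)$ and $\mathrm{hi}(o)\le j$. Hence $F[i][j]$ is a two-dimensional dominance count over the point set $\set{(\mathrm{lo}(o),\mathrm{hi}(o)) : o \text{ an occurrence of } Q \text{ in } G}$, counted with whatever fixed convention one adopts for what an occurrence is.

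To build $F$, I would enumerate all occurrences of $Q$ in $G$; since $|V(Q)|=O(1)$ there are at most $n^{O(1)}$ of them and they can be listed in polynomial time (e.g.\ brute force over $|V(Q)|$-tuples of vertices), which incidentally certifies that the structure is polynomial-time constructible. Tally them into an $(n{+}1)\times(n{+}1)$ array $C$ where $C[\ell][h]$ is the number of occurrences $o$ with $\mathrm{lo}(o)=\ell$ and $\mathrm{hi}(o)=h$. A prefix sum along each row followed by a suffix sum along each column converts $C$ into $F$ in $O(n^2)$ time and space; each stored count is at most the total number of occurrences, i.e.\ $n^{O(1)}$, so it fits in $O(1)$ machine words and the space bound is $O(n^2)$.

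I do not expect a real obstacle here: the content of the argument is entirely in the second paragraph, namely recognising that for a \emph{fixed} pattern the only feature of an occurrence relevant to a range query is the pair (min rank, max rank) of its vertices, which collapses the problem to a textbook 2D dominance-counting instance whose all-pairs table trivially gives $O(n^2)$ space and $\tO(1)$ query. The genuinely interesting question — whether the $O(n^2)$ space can be beaten — is the one addressed by the other results in this section, and Theorem~\ref{thm:prob1:wedge-lb} indicates it is hard already for $Q=$ wedge.
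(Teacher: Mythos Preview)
Your proof is correct and takes essentially the same approach as the paper: both reduce the problem to 2D dominance counting by associating each occurrence with the pair (minimum attribute, maximum attribute) of its vertices. The only cosmetic difference is that the paper stores the resulting weighted point set in a Chazelle range-sum structure, whereas you materialise the full $n\times n$ answer table via prefix/suffix sums; since there are $O(n^2)$ points either way, the bounds coincide.
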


The above result is difficult to improve: reducing the space by an $n^\delta$ factor for any constant $\delta > 0$ breaks the strong set disjointness conjecture. To explain, assume $n = O(m)$.\footnote{Discard ``isolated'' vertices with no incident edges.} If there was a structure of $O(n^{2-\delta}) = O(m^{2-\delta})$ space and $\tO(1)$ query time, applying the structure to $Q=$ wedge would yield a breakthrough on set disjointness by way of  Theorem~\ref{thm:prob1:wedge-lb}. The reader should note that the hardness comes from producing a guarantee on all $Q$; it is possible to do better for special patterns (Theorem~\ref{thm:prob1:clique}). The hardness thus endows $Q =$ wedge with unique significance in Problem 1. Theorem~\ref{thm:prob1:gen} further implies that Problem 1 under $Q =$ wedge is the hardest when $G$ is the sparsest: $m = o(n^{1+\eps})$ for any constant $\eps > 0$. To see why, set $m = n^{1+\eps}$, which gives $n^2 = m^{\fr{2}{1+\eps}}$. Since $\fr{2}{1+\eps} = 2 - \fr{2\eps}{1+\eps}$, Theorem~\ref{thm:prob1:gen} yields a structure of $O(m^{2-\delta})$ space and $\tO(1)$ query time with $\delta = \fr{2\eps}{1+\eps}$, improving Theorem~\ref{thm:prob1:wedge} by a polynomial factor at $\lambda = \tO(1)$.    

\subsubsection{Problem 2} \label{sec:ours2} 

A listing query ensures a {\em delay} $\Delta$ if it reports a new occurrence of $Q$ or declares ``no more occurrences'' within $\Delta$ time after the previous occurrence\footnote{The reader may assume that a dummy occurrence is always output at the beginning of a query algorithm.}. 

\extraspacing {\bf Arbitrary Subgraphs.} We will show:

\begin{theorem} \label{thm:prob2:gen}
	For any $Q$ and $\Delta \ge 1$, there is a structure for Problem 2 that uses $\tO(m + m^{\rho^*}/\Delta)$ space and has a query delay of $\tO(\Delta)$, where ${\rho^*}$ is the fractional edge covering number of $Q$.
\end{theorem}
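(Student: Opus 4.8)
\emph{Plan.} The plan is to recast the query as a range-restricted worst-case optimal join and then mount a sparse ``skip index'' on top of the Generic Join recursion. Since $Q$ is connected with $k=O(1)$ vertices, listing the occurrences of $Q$ in $G_q$ is — up to an $O(1)$-factor overhead handled by standard automorphism breaking and an inclusion--exclusion over the $O(1)$ vertex identifications of $Q$ (to pass from homomorphisms to injective maps) — the same as enumerating the tuples of the natural join having one variable per vertex of $Q$, one copy of the edge relation $E$ per edge of $Q$, and a unary predicate ``attribute lies in $q$'' on each variable. The AGM bound caps the number of these tuples by $m^{\rho^*}$, and Generic Join enumerates all of them in $\tO(m^{\rho^*})$ total time; we fix a variable order once and for all, which fixes a canonical enumeration order. (The $\tO(m)$ term in the bound will simply be the cost of storing $G$.)

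\emph{The case $\Delta = O(1)$.} First run Generic Join on $G$ itself, with \emph{no} range predicate, and collect the $K \le m^{\rho^*}$ occurrences. Send each occurrence $o$ to the planar point $(\mathrm{lo}(o), \mathrm{hi}(o))$, the least and greatest attribute over the vertices of $o$. Then $o$ is present in $G_q = G_{[x_1,x_2]}$ exactly when $x_1 \le \mathrm{lo}(o)$ and $\mathrm{hi}(o) \le x_2$, i.e.\ when the point lies in the ``lower-right'' quadrant anchored at $(x_1,x_2)$. A priority search tree on the $K$ points reports the points of any such quadrant with $\tO(1)$ delay in $O(K)$ space; together with an adjacency encoding of $G$ this gives $\tO(m + m^{\rho^*})$ space and $\tO(1)$ delay, matching the theorem at $\Delta = O(1)$.

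\emph{General $\Delta$: a sampled recursion-tree index.} Let $\mathcal{T}$ be the Generic Join recursion tree on $G$; it has $\tO(m^{\rho^*})$ nodes and depth $k = O(1)$, each node $\nu$ is a partial assignment with a set $P_\nu$ of completions (each again identified with its planar point), and $\sum_\nu |P_\nu| = (k+1)K = O(m^{\rho^*})$ since an occurrence lies on $k+1$ root-to-leaf nodes. Restricting to $G_q$ only shrinks the Generic Join domains, so the recursion tree for $G_q$ is node-contained in $\mathcal{T}$, and $\nu$ has to be visited for $q$ precisely when $P_\nu$ meets the quadrant of $q$. For each node $\nu$ we store an $\epsilon$-net $M_\nu \subseteq P_\nu$, with $\epsilon = \Delta/|P_\nu|$, for the range space of lower-right quadrants: a subset of size $\tO(|P_\nu|/\Delta)$ (quadrants have $O(1)$ VC dimension), constructible in polynomial time, such that every quadrant holding $\ge \Delta$ points of $P_\nu$ holds a point of $M_\nu$; we keep each $M_\nu$ in its own priority search tree. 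The total space is $\tO(\sum_\nu |P_\nu|/\Delta) = \tO(m^{\rho^*}/\Delta)$, plus $O(m)$ for $G$. Conceptually this is the predicate-free space--delay idea of \cite{dk18} — keep only a $1/\Delta$ fraction of the enumeration state — composed with the planar-point encoding that injects the range predicate.

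\emph{Query algorithm and the main obstacle.} To answer $q$ we emulate Generic Join on $G_q$, but before recursing below a visited node $\nu$ we query $M_\nu$'s priority search tree with the quadrant of $q$: each returned net point is an \emph{explicit} occurrence of $G_q$ below $\nu$, so as long as such a point exists we can descend the $O(1)$ children on the path toward the next one, which keeps the delay at $\tO(\Delta)$. The subtle case is a \emph{$\Delta$-light} node $\nu$ — its quadrant contains $1 \le c < \Delta$ occurrences of $G_q$ but no net point — where naively continuing Generic Join on $G_q$ could wander through $\omega(\Delta)$ range-feasible dead ends before emitting the next output. I expect this to be the main obstacle. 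The intended resolution is to handle light subtrees recursively by the same net-guided descent and, only where the recursion bottoms out, to recompute occurrences directly from $G$ via Generic Join on the residual instance; making this respect the $\tO(\Delta)$ delay bound requires guaranteeing that such a recomputation only ever touches a subgraph whose AGM bound is $\tO(\Delta)$, which is the technically delicate point, together with the amortized bookkeeping that charges the nested descents correctly against the $\tO(\Delta)$ budget. By contrast, building $\mathcal{T}$ and the nets in polynomial time, the automorphism/injectivity reduction, and the $O(\log)$ factors from the priority search trees are routine.
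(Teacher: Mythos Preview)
Your $\Delta=O(1)$ case is fine, but the general-$\Delta$ construction has a genuine gap, precisely at the point you flag as ``the main obstacle'' and leave unresolved. Your heavy/light dichotomy is on the \emph{output size}: the $\epsilon$-net stays silent at $\nu$ exactly when $|P_\nu\cap\text{quadrant}(q)|<\Delta$. But the time Generic Join needs to explore the residual instance below $\nu$ (restricted to $q$) is governed by the \emph{AGM bound} of that instance, and the two quantities can differ by any polynomial factor. Concretely, take $Q=$ triangle and a vertex $v_1$ with $\deg(v_1)=\Theta(\sqrt m)$ that lies in fewer than $\Delta$ triangles; at the node ``$X_1=v_1$'' the residual AGM bound is $\deg(v_1)\cdot\sqrt m=\Theta(m)$ while the node is light. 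Your proposed fix --- ``handle light subtrees recursively by the same net-guided descent'' --- gains nothing, because every descendant of a light node is also light, so the nets are silent all the way down; and ``recompute via Generic Join on the residual instance'' then costs $\tO(\text{AGM})$, not $\tO(\Delta)$. No amortization saves this either, since $\text{AGM}(G_q)$ can exceed $\Delta\cdot|\text{output}|$ by any polynomial.

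The paper's proof takes the dichotomy on the AGM bound rather than on the output. It decomposes the $d$-cube $q^d$ into $\tO(1)$ dyadic boxes $B(I_1,\dots,I_d)$ and calls a box heavy when $\prod_{e}|R_e\ltimes B|^{W(e)}>\Delta$. Light boxes are then solved from scratch by a worst-case optimal join in $\tO(\Delta)$ time --- now guaranteed, because the threshold is on AGM. Heavy boxes carry a precomputed Deep--Koutris structure; a generalized AGM inequality (summing the per-box AGM bounds over all dyadic boxes at fixed levels still gives at most $\prod_e|R_e|^{W(e)}$) bounds their total space by $\tO(m^{\rho^*}/\Delta)$. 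This AGM-based split is the missing idea. (A smaller issue: inclusion--exclusion over vertex identifications is a counting device, not a listing one --- you cannot subtract tuples in an enumeration. The paper handles the homomorphism/occurrence gap by pruning Deep--Koutris boxes that yield no new occurrence, which preserves both space and delay.)
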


Imagine assigning each edge of $Q$ a non-negative weight such that (i) for each vertex of $Q$, all its incident edges receive a combined weight at least 1 and (ii) the total weight of all edges is minimized. The fractional edge covering number ${\rho^*}$ of $Q$ is the total weight of an optimal assignment. The maximum number of occurrences of $Q$ in $G$ is $O(m^{\rho^*})$ \cite{agm13} and the bound is tight in the worst case. 

\vgap

Our structure actually settles a problem on natural joins:

\minipg{\minipgwidth}{
	{\bf Range Join.} Let $\R$ be a set of $O(1)$ relations each with $O(1)$ real-valued attributes. Denote by $\join(\R)$ the natural join result on the relations in $\R$.    
	Given an interval $q = [x_1, x_2]$, a query reports all the tuples $t \in \join(\R)$ such that every attribute of $t$ falls in $q$.
	
}

\noindent Let $N$ be the total number of tuples in the relations of $\R$. For any $\Delta \ge 1$, we give a structure of $\tO(N + N^{\rho^*}/\Delta)$ space answering a query with an $\tO(\Delta)$ delay. Here, the fractional edge covering number ${\rho^*}$ is with respect to the join's hypergraph (details deferred to Section~\ref{sec:prob2-arbitrary}). 

\vgap 

The challenge behind Theorem~\ref{thm:prob2:gen} is to design a structure that works for all $Q$. It is possible to do better for specific $Q$. Next, we present three examples that are not only important subproblems themselves but also illustrate different techniques.

\extraspacing {\bf Triangles.} We will show:

\begin{theorem} \label{thm:prob2:triangle}
	For Problem 2 with $Q = $ triangle, there is a structure of $O(m)$ space answering a query with an $\tO(1 + (m^*)^{\fr{\omega-1}{\omega+1}})$ delay, where $m^*$ is the number of edges appearing in at least one reported triangle. 
\end{theorem}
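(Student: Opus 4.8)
The plan is to combine a geometric ``range localization'' with fast triangle listing, the point being to reduce, for every query, to a subgraph in which \emph{every} edge participates in a triangle. First I would sort the vertices by attribute and work in rank space, so that the data structure is essentially $G$ (with adjacency lists sorted by rank) together with a handful of auxiliary structures of total size $O(m)$; a query interval becomes a rank range, and $V_q$ is found by binary search. Let $G_q^{*}$ be the subgraph of $G_q$ consisting of the $m^{*}$ edges that lie on at least one triangle of $G_q$. The key observation is that $G_q^{*}$ has exactly the same set of triangles as $G_q$, that each of its edges lies on a triangle, and hence that it has at least $m^{*}/3$ triangles. So it suffices to (i) enumerate $E(G_q^{*})$, and (ii) list the triangles of an $M$-edge graph all of whose edges lie on triangles: the listing algorithm of Bj\"orklund, Pagh, Williams and Zwick, which lists all $t$ triangles of an $M$-edge graph in $\tO(M^{2\omega/(\omega+1)}+M^{3(\omega-1)/(\omega+1)}\,t^{(3-\omega)/(\omega+1)})$ time, then uses, amortized over the $t\ge M/3$ reported triangles, only $\tO(M^{(\omega-1)/(\omega+1)})$ per triangle (a short calculation using $M/3\le t=O(M^{3/2})$), which is exactly the target delay with $M=m^{*}$.

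For (i), note that an edge $e=uv$ with attribute values $\alpha(e)\le\beta(e)$ lies in $E(G_q^{*})$ iff $[\alpha(e),\beta(e)]\subseteq[x_1,x_2]$ and the set $N_e$ of common-neighbor attributes of $u$ and $v$ meets $[x_1,x_2]$. Classify the edges by a query-independent bit: an edge is \emph{easy} if $N_e$ intersects its own span $[\alpha(e),\beta(e)]$, and \emph{hard} otherwise. An easy edge qualifies precisely when $\alpha(e)\ge x_1$ and $\beta(e)\le x_2$ --- a planar dominance condition. For a hard edge, every common-neighbor attribute lies outside $[\alpha(e),\beta(e)]$, so $e$ qualifies iff its endpoints are in range and, in addition, $\ell(e):=\max\{a\in N_e:a<\alpha(e)\}\ge x_1$ or $r(e):=\min\{a\in N_e:a>\beta(e)\}\le x_2$; since $\ell(e)<\alpha(e)$ and $r(e)>\beta(e)$, these are again dominance conditions, on the pairs $(\ell(e),\beta(e))$ and $(\alpha(e),r(e))$. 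All three are instances of $3$-sided planar range reporting, which a priority search tree answers in $O(m)$ space with $\tO(1)$ delay; minor bookkeeping removes edges caught by two conditions, and emptiness ($m^{*}=0$) is detected in $\tO(1)$. I would also keep a companion $\tO(1)$-time primitive returning \emph{one} triangle of $G_q$ (or ``none''): sort the triangles of $G$ by their maximum attribute and take prefix maxima of their minimum attribute, so that for each candidate right endpoint one knows the triangle with the largest minimum attribute among those fitting to its left; stored at the $O(n)$ relevant thresholds this is $O(n)=O(m)$ space, and the whole structure is built in time no more than that of listing all triangles of $G$.

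The main obstacle is turning the amortized bound into a worst-case delay. One cannot extract all of $E(G_q^{*})$ first and only then start listing: a long prefix of the extraction may contain no triangle at all (say, $G_q^{*}$ is a disjoint union of triangles and the dominance queries return the three edges of each only after many other triangles have been half-revealed), so the first output could be delayed by $\Theta(m^{*})\gg\tO((m^{*})^{(\omega-1)/(\omega+1)})$; and merely interleaving extraction with periodic re-runs of the listing routine on the extracted prefix fixes the \emph{total} time but not this first-output delay. The intended fix is to bootstrap with the one-triangle primitive and then run an \emph{incremental} version of the listing algorithm whose outputs sit at the cheap base cases of a shallow recursion and which performs no single large matrix multiplication up front --- a genuine constraint, since $G_q^{*}$ has $\Theta((m^{*})^{2/(\omega+1)})$ high-degree vertices for the natural threshold, which already exceeds the per-step budget $\tO((m^{*})^{(\omega-1)/(\omega+1)})$. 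Carrying this out --- a de-amortized, bounded-delay realization of Bj\"orklund--Pagh--Williams--Zwick on $G_q^{*}$, correctly interleaved with the $\tO(1)$-delay extraction of $E(G_q^{*})$ --- is where I expect the proof to spend its effort; the range machinery of the previous paragraph is, by comparison, routine computational geometry.
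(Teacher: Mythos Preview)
Your edge-extraction is essentially the paper's (your ``easy'' edges are its type-3; your $\ell(e),r(e)$ sentinels are its type-1 and type-2), so that part is fine. The gap is exactly where you place it, but the direction you sketch --- an ``incremental'' BPWZ that avoids any single large matrix product --- is not how the paper resolves it, and you yourself stop short of carrying it out. The paper's fix rests on two ideas you nearly have but do not exploit. First, your own sentinels already hand you a witness triangle for \emph{every} extracted edge: the easy-case common neighbor in the span, or the vertex named by $\ell(e)$ or $r(e)$, is in $q$ and completes a triangle on $e$. Attaching that witness to the stored point means the edge-extraction phase itself emits $\Theta(m^*)$ triangles with $\tO(1)$ delay (each triangle at most thrice; apply the duplicate-removal buffer). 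The paper paces these so that when extraction ends only a small constant fraction have been output, leaving $\Theta(m^*)$ unreported ``free'' triangles and $O(m^*)$ already-reported ``forbidden'' ones. Your separate one-triangle primitive is both unnecessary and far too weak to sustain the delay across the $\tO(m^*)$ steps of extraction.

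Second, the paper never touches BPWZ's internals. It runs BPWZ as a black box with geometrically increasing parameter $k_i = 3^i \cdot \Theta(m^*)$; during run $i$ it emits the free triangles held over from run $i{-}1$ (or from extraction when $i=1$), while the triangles found in run $i$ that are not already forbidden become the free set for run $i{+}1$. A short induction shows $\Omega(k_i)$ free triangles are available at the start of run $i$, and since run $i$ costs $\tO\big((m^*)^{3(\omega-1)/(\omega+1)} k_i^{(3-\omega)/(\omega+1)}\big)$ the per-output gap is $\tO\big((m^*)^{3(\omega-1)/(\omega+1)}/k_i^{2(\omega-1)/(\omega+1)}\big) = \tO\big((m^*)^{(\omega-1)/(\omega+1)}\big)$ using $k_i = \Omega(m^*)$. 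Your concern about a large up-front matrix multiplication is thus sidestepped entirely: the multiplication runs unmodified, but pre-stockpiled triangles are being emitted while it does.
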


The fractional edge covering number ${\rho^*}$ is 1.5 for $Q = $ triangle. To ensure $\tO(m)$ space, Theorem~\ref{thm:prob2:gen} needs to set $\Delta = \sqrt{m}$. As $\fr{\omega-1}{\omega+1} < 0.408$, Theorem~\ref{thm:prob2:triangle} achieves a polynomial improvement in delay. The reader should note that the value $m^*$ in Theorem~\ref{thm:prob2:triangle} never exceeds $m$ but can be much less (this happens when there are few triangles to list). Problem 2 with $Q =$ triangle and $q$ fixed to $(-\infty, \infty)$ was used as a motivating problem in the previous work of \cite{dk18}, which described a structure of $O(m)$ space with a delay $\tO(\sqrt{m})$ and is thus strictly improved by Theorem~\ref{thm:prob2:triangle}. 

\extraspacing {\bf $\bm{\ell}$-Stars.} An {\em $\ell$-star} is a tree with $\ell$ leaves and one non-leaf vertex (a wedge is a 2-star). We will show:

\begin{theorem} \label{thm:prob2:l-star}
	For Problem 2 where $Q = \ell$-star, there is a structure of $O(m)$ space answering a query with an $\tO(1)$ delay. 
\end{theorem}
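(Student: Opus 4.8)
\emph{Setup and decomposition.} First relabel the vertices $1,\dots,n$ in increasing attribute order, and store for each vertex its neighbour list sorted in this same order; this costs $O(m)$ space. A query $q=[x_1,x_2]$ becomes, by binary search, a contiguous vertex range $[i,j]$, and the valid centres of an $\ell$-star occurrence are exactly the $c\in[i,j]$ with $d_q(c):=|N(c)\cap[i,j]|\ge\ell$. For $\ell\ge2$ (the only interesting case --- $\ell=1$ is just listing the edges contained in $[i,j]$, a classic interval-containment task solvable in $O(m)$ space with $\tO(1)$ delay) the centre of a star is its unique vertex of degree $\ge2$, so an occurrence is uniquely a pair (centre $c$, size-$\ell$ subset of $N(c)\cap[i,j]$). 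Hence it suffices to: (A) enumerate the valid centres with delay $\tO(1)$; and (B) for each valid centre $c$, enumerate its $\binom{d_q(c)}{\ell}$ occurrences with $O(1)$ delay. Part (B) is routine --- $N(c)\cap[i,j]$ is a contiguous block of $c$'s sorted list, found by binary search, and its size-$\ell$ subsets are enumerated lexicographically with $O(\ell)=O(1)$ delay.

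\emph{The heart: enumerating valid centres.} Part (A) carries the difficulty. Concatenate all the sorted neighbour lists into one array $A$ of length $2m$; position $p$ lies in the block of a vertex $c_p$ and stores a neighbour $A[p]$ of $c_p$. I would use a \emph{canonical witness}: $c$ is valid iff, letting $p$ be the position of the first neighbour of $c$ that is $\ge i$, the neighbours at positions $p,\dots,p+\ell-1$ all lie in $[i,j]$ and $c\in[i,j]$. Precompute, for each $p$ whose block also contains $p+\ell-1$: $\mathrm{pv}_p:=A[p-1]$ if $p-1$ is in the same block and $-\infty$ otherwise; $L_p:=\min(A[p],c_p)$; and $R_p:=\max(A[p+\ell-1],c_p)$ --- all $O(m)$ space. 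Then $p$ ``witnesses'' a valid centre for query $[i,j]$ exactly when $\mathrm{pv}_p<i\le L_p$ and $R_p\le j$ (the first two inequalities force $p$ to be the first-$\ge i$ position of its block and force $c_p\ge i$; the last forces the $\ell$-th such neighbour $\le j$ and $c_p\le j$), and for each valid $c$ there is \emph{exactly one} such witness, so no distinct-element bookkeeping is needed. Viewing each $p$ as a point $(\mathrm{pv}_p,L_p,R_p)$ tagged with $c_p$, this is a $3$-sided / $3$D-dominance reporting query on $O(m)$ points; I would answer it with a linear-space structure --- for instance an interval tree over the intervals $(\mathrm{pv}_p,L_p]$, each node augmented with a priority search tree on $(\text{endpoint rank},\,R_p)$ --- configured as an iterator with $\tO(1)$ delay. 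A plain range tree would here cost $\tO(m)$ space, which this more careful design avoids; reading off $c_p$ from a reported $p$ then feeds part (B).

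\emph{Putting it together, and the obstacle.} For the delay: between two consecutive occurrences we either move to the next size-$\ell$ subset of the current centre ($O(\ell)$ time), or advance the (A)-iterator to the next valid centre --- traversing at most $O(\log m)$ interval-tree nodes on the search path, each an $O(\log m)$ priority-search-tree step, hence $\tO(1)$ --- and then binary-search its neighbour block and emit its first subset, again $\tO(1)$; so the delay is $\tO(1)$ throughout, and since (A) never repeats a centre no amortisation is needed. I expect the main obstacle to be precisely part (A): obtaining \emph{linear} space (not $\tO(m)$) together with \emph{worst-case} (not amortised) $\tO(1)$ delay. The two ideas that unlock this are the canonical-witness reformulation --- which simultaneously kills duplicate witnesses and turns ``$c$ is a valid centre'' into a $3$-sided range predicate --- and replacing the range tree by interval-tree-plus-priority-search-tree machinery; verifying that this machinery indeed supports constant-delay enumeration, and the boundary bookkeeping (the first neighbour $\ge i$, folding the constraint $c\in[i,j]$ into $L_p,R_p$, empty or short sub-lists, and the $\ell=1$ case), are the routine-but-fiddly remainder.
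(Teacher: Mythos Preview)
Your proposal is correct but takes a genuinely different route from the paper's proof. Both split the task into (A) enumerate valid centres with $\tO(1)$ delay and (B) enumerate the stars per centre, and both handle (B) identically; the divergence is in (A).

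The paper's trick for (A) is to \emph{split by rank}: for each $r\in[1,\ell+1]$ it asks for all $u$ that are centres of some rank-$r$ $\ell$-star in $G_q$ (where $r$ is the position of $u$ among the star's vertices sorted by attribute). For fixed $r$, the paper precomputes for each $u$ two sentinels $v_1^*,v_\ell^*$ --- essentially the $(r-1)$-th largest neighbour below $A_u$ and the $(\ell-r+1)$-th smallest neighbour above $A_u$ --- so that $u$ is a rank-$r$ $q$-centre iff $x_1\le A_{v_1^*}$ and $A_{v_\ell^*}\le x_2$. This is a \emph{two}-dimensional range-reporting query on only $O(n)$ points, handled by Chazelle's linear-space structure. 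A centre may be reported under several ranks $r$, so the paper tolerates up to $\ell+1$ duplicates and cleans them up with the generic buffering technique of Section~\ref{sec:pre}.

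Your canonical-witness idea instead pins each valid centre to a \emph{unique} position $p$ in the concatenated adjacency array and encodes validity as $\mathrm{pv}_p<i\le L_p$, $R_p\le j$ --- a stabbing-plus-threshold query over $O(m)$ objects that you resolve with an interval tree augmented by priority search trees. What you gain is that no duplicate-removal step is needed; what you pay is a three-coordinate predicate, a heavier secondary structure, and $O(m)$ rather than $O(n)$ points. The paper's approach is lighter (pure 2D, off-the-shelf Chazelle structure, $\ell+1$ copies of an $O(n)$-point set), at the cost of the duplicate pass. Both land at $O(m)$ space and $\tO(1)$ delay.
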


As a corollary, for any interval $q$, $O(m)$ space suffices to {\em detect} the presence of an $\ell$-star in $G_q$ using $\tO(1)$ time. For $Q=$ wedge, this means that the hardness manifested by Theorem  \ref{thm:prob1:wedge-lb} is indeed due to {\em counting}.

\extraspacing {\bf $\bm{2\ell}$-Cycles\footnote{A cycle with $2\ell$ vertices.}.} We will show:

\begin{theorem} \label{thm:prob2:cycle}
	For Problem 2 with $Q = 2\ell$-cycle where $\ell \ge 2$, there is a structure of $\tO(\# P_\ell)$ space answering a query with an $\tO(1)$ delay, where $\# P_\ell$ is the number of $\ell$-paths in $G$.
\end{theorem}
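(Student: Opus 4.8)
The plan is to reduce the problem to enumerating ``compatible pairs'' of stored $\ell$-paths, using the decomposition of a $2\ell$-cycle into two internally vertex-disjoint $\ell$-paths with common endpoints. Given an occurrence $C$ of $Q$, let $a$ be the vertex of $C$ with the smallest id and $b$ its antipode on $C$; the two arcs of $C$ between $a$ and $b$ are $\ell$-paths $P,P'$ from $a$ to $b$ that are internally vertex-disjoint and all of whose internal vertices have id larger than that of $a$. Fixing an arbitrary total order $\prec$ on $\ell$-paths, I would further insist $P \prec P'$. This representation is unique: $a$ and $b$ are determined by $C$ (distinct antipodal pairs of $C$ are pairwise disjoint, so the min-id vertex lies in exactly one of them), and $\{P,P'\}$ is then determined by $\{a,b\}$. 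Hence it suffices to enumerate, without repetition, the triples $(\{a,b\},P,P')$ of this form.

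For the structure, I would compute all $\#P_\ell$ many $\ell$-paths of $G$ (polynomial time) and, for each unordered endpoint pair $\{a,b\}$ with $\mathrm{id}(a)<\mathrm{id}(b)$, keep the group $\mathcal{G}_{a,b}$ of those $\ell$-paths from $a$ to $b$ whose internal vertices all have id $>\mathrm{id}(a)$; for each such path $P$ record the pair $(\alpha_P,\beta_P)$ of the smallest and largest attribute among $P$'s vertices together with $P$'s set of $\ell-1$ internal vertices. A path $P$ survives a query $q=[x_1,x_2]$ iff $x_1\le\alpha_P$ and $\beta_P\le x_2$, a planar dominance condition; so I would attach to each group a dominance-reporting structure on the points $(\alpha_P,\beta_P)$, and attach to each group $\mathcal{G}_{a,b}$ and each vertex $x$ the sublist of $\mathcal{G}_{a,b}$ consisting of the paths through $x$ (sorted in a fixed order). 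Since each $\ell$-path is charged to $O(1)$ groups and $O(\ell)$ vertex-sublists, the space is $\tO(\#P_\ell)$.

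The query would proceed group by group. Within $\mathcal{G}_{a,b}$ one enumerates the surviving $\ell$-paths with polylogarithmic delay via the dominance structure; for each surviving $P$ in the group order, one enumerates the surviving $P'\succ P$ internally disjoint from $P$ --- these are exactly the survivors of $\mathcal{G}_{a,b}$ lying neither below $P$ in $\prec$ nor in $\bigcup_{x\in\mathrm{int}(P)}(\text{sublist of paths through }x)$, i.e.\ the complement of a union of $O(\ell)$ sorted lists, which can be swept producing each element with $\tO(1)$ amortized cost. Each pair $(P,P')$ yields the occurrence $P\cup P'$, after an $O(1)$-time check discarding the few non-canonical representations.

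The principal obstacle will be upgrading this amortized bound to a worst-case $\tO(1)$ delay: the complement-sweep may cross a long run of ``covered'' (conflicting) paths before the next compatible $P'$ appears, and, likewise, one must avoid spending time on groups or first-arcs $P$ that contribute nothing for the given $q$ --- note that the number of conflicting pairs can be polynomially larger than $\#P_\ell$, so enumerate-and-filter is not affordable. I would control run lengths by fixing, during preprocessing, a suitable (e.g.\ random) order on each group so that for every query the survivors minus $O(\ell)$ conflict-sublists decompose into $\tO(1)$ contiguous blocks, enabling block-to-block jumps in $\tO(1)$ time; a deterministic alternative is a per-group successor structure returning, given $O(\ell)$ exclusion lists, the next survivor outside them in $\tO(1)$ time. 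To skip unproductive groups and first-arcs I would attach to each group a small witness certifying whether it contributes for a given $q$ and, if so, locating one compatible pair, and then charge the remaining fruitless probes to the $\tO(\#P_\ell)$-size structure. The case $\ell=2$ is a clean warm-up, since there the internal-disjointness constraint is vacuous beyond $P\ne P'$ (each $\ell$-path has a single internal vertex), so only the range filtering and the canonical-form bookkeeping remain.
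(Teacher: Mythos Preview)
Your decomposition of a $2\ell$-cycle into two $\ell$-paths at the minimum-attribute vertex and its antipode matches the paper's, and your outer loop (enumerate groups $\mathcal G_{a,b}$ and surviving first arcs $P$ via 2D dominance) is on the right track. The genuine gap is the inner step: for a fixed surviving $P$, enumerate the surviving $P'\in\mathcal G_{a,b}$ that are \emph{internally disjoint} from $P$ with $\tO(1)$ delay. You phrase this as ``enumerate the complement of a union of $O(\ell)$ sorted sublists (one per internal vertex of $P$)'', but that complement can be empty or nearly so even when the group is large, and nothing you store lets you jump over an arbitrarily long run of conflicting $P'$ in $\tO(1)$ time. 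The fixes you sketch --- a random order on $\mathcal G_{a,b}$ making the complement decompose into $\tO(1)$ contiguous blocks for \emph{every} query and \emph{every} $P$, or a successor structure supporting $O(\ell)$ dynamic exclusion lists --- are not known to exist; precomputing per-$P$ successor pointers would cost $\Omega(|\mathcal G_{a,b}|^2)$ space per group, and storing the disjoint pairs explicitly is bounded only by the number of $2\ell$-cycles, not by $\#P_\ell$.

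The idea you are missing, and which the paper uses, is to encode interior disjointness \emph{geometrically}. Sort the $\ell-1$ internal vertices of each path by attribute and regard the sorted attribute tuple as a point in $\real^{\ell-1}$. Then ``$P'$ is interior disjoint from $P$'' is exactly ``each coordinate of $P'$'s point avoids the $\ell-1$ attribute values of $P$'s internal vertices'', i.e., the point of $P'$ lies in one of $\ell^{\ell-1}=O(1)$ open axis-aligned boxes determined by $P$. A range tree on the points of $\mathcal G_{a,b}$ (augmented with one extra coordinate carrying the path's maximum attribute, to enforce the $\le x_2$ constraint) then returns exactly the desired $P'$'s with $\tO(1)$ delay and $\tO(|\mathcal G_{a,b}|)$ space per group. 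The outer enumeration of \emph{contributing} first arcs $P$ is handled by precomputing, for each $P$, the smallest possible max-attribute over all cycles through $P$ with the same anchor pair, reducing that step to 2D dominance reporting as well. This geometric reduction is precisely what replaces your unsolved ``skip the conflicts'' step.
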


The fractional edge covering number ${\rho^*}$ is $\ell$ for a $2\ell$-cycle. Theorem~\ref{thm:prob2:gen} needs $\tO(m^\ell)$ space to achieve an $\tO(1)$ delay. The space in Theorem~\ref{thm:prob2:cycle} is significantly better. For $\ell = 2$ ($Q=$ 4-cycle), the space is $\tO(nm)$ which is the maximum number of wedges in $G$. For $\ell > 2$, the space is $\tO(m^{\lc(\ell+1)/2\rc})$ which is the maximum number of $\ell$-paths in $G$. 

%

\subsection{Related Work} \label{sec:related}

The preceding sections have covered the most relevant existing results. We will now proceed to discuss other related work. 

\vgap

Pattern searching has been extensively studied in one-off computation. We refer the reader to \cite{ayz97,bkk17,bps21,cn85,cdm17,fkll15,flr+12,gkw19,kkm00,ww13} and \cite{ayz95,bpwz14, cn85,e94,e99,els10,hkss13,knrr15,np85,nprr18}, as well as the references therein, for algorithms on counting and listing, respectively. Those algorithms can be applied in Problem 1 and 2 after $G_q$ has been generated. Our focus in this work is to avoid a full generation of $G_q$ because doing so can take $\Omega(m)$ time. 

\vgap 

In the other extreme, one can precompute the set $S$ of occurrences of $Q$ in $G$. The size of $S$ is $O(m^{\rho^*})$ (AGM bound), assuming that $Q$ has a constant size. By resorting to standard computational geometry techniques \cite{bcko08}, we can store $S$ in structures of $\tO(m^{\rho^*})$ space to answer a query of Problem 1 in $\tO(1)$ time and a query of Problem 2 with an $\tO(1)$ delay. For Problem 1, Theorem~\ref{thm:prob1:gen} achieves a better space bound on every $Q$ with ${\rho^*} \ge 2$. When ${\rho^*} < 2$, $Q$ has at most three vertices: a 1-path (single edge), a wedge, or a triangle. We have resolved the wedge and triangle cases (Theorem \ref{thm:prob1:wedge} and \ref{thm:prob1:clique}), while Problem 1 is trivial for $Q=$ 1-path. For Problem 2, Theorem~\ref{thm:prob2:gen} captures the above extreme idea as a special case with $\Delta = \tO(1)$ and offers a tunable space-query tradeoff. 

\vgap 

A {\em relational event graph}, introduced by Bannister et al.\ \cite{bdes13}, is a graph $G = (V, E)$ where every {\em edge} $e \in E$ carries a real-valued timestamp. For an interval $q = [x_1, x_2]$, let $G^\mit{edge}_q$ be the subgraph of $G$ induced by all the edges whose timestamps are covered by $q$. A pattern searching query counts/lists the occurrences of a pattern $Q$ in $G^\mit{edge}_q$. See \cite{bdes13, cm19, cms20} for several data structures designed for such queries. Similar as it sounds, pattern searching on a relational event graph is drastically different from Problem 1 and 2 such that there is little overlap --- in neither results nor techniques --- between our solutions and those in \cite{bdes13, cm19, cms20}.

\vgap

Delay minimization is an important topic in the literature of joins and conjunctive queries; see \cite{bdg07,bks17,ck21,czb+20,d20,dss14,dg07,dk18,kno+20,ks13,oz15,s15b,ssv18,sv17} and their references. Regarding our problems, we are not aware of previous work giving a result better than what has already been mentioned. Our formulation of range join listing (Section~\ref{sec:ours2}) suggests that the presence of query predicates can pose new challenges on joins (also conjunctive queries) from the indexing's perspective. Deep and Koutris \cite{dk18} proved a result equivalent to Theorem~\ref{thm:prob2:gen} (up to an $\tO(1)$ factor) on Problem 2, but only in the special scenario where a query concerns the whole $G$, i.e., fixing the query range $q$ to $(-\infty, \infty)$. 

\section{Preliminaries} \label{sec:pre}

In this section, we will describe several technical tools to be deployed in our solutions. 

\extraspacing {\bf Structures for Multidimensional Points.} We will utilize some well-known geometry data structures as introduced below. The reader does not need to be bothered with the details of these structures because we will apply them as ``black boxes''. Let $P$ be a set of $n$ points in $d$-dimensional space $\real^d$ where $d$ is a constant. Given a rectangle $q$ of the form $[x_1, y_1] \times [x_2, y_2] \times ... \times [x_d, y_d]$, a {\em range reporting} query enumerates the points in $P \cap q$. We can create a {\em range tree} \cite{b79,bcko08} on $P$, which uses $\tO(n)$ space and permits us to answer such a query with an $\tO(1)$ delay. When $d = 2$, we can replace the range tree with a {\em Chazelle's structure} \cite{c88} which retains the aforementioned query performance but reduces the space consumption to $O(n)$.

\vgap

We will also need {\em range sum} queries on $P$ in the scenario where each point in $P$ is 2D (i.e., $d=2$) and carries a real-valued {\em weight}. Given a rectangle $q = [x_1, y_1] \times [x_2, y_2]$, such a query reports the total weight of the points in $P \cap q$. We can again build a Chazelle's structure of \cite{c88} on $P$ which occupies $O(n)$ space and answers a query in $\tO(1)$ time. 

\extraspacing {\bf From ``Delays with Duplicates'' to ``Delays under Distinctness''.} Let us consider a duplicate-removal scenario often encountered in designing algorithms with small delays. Suppose that we have an algorithm $\A$ for enumerating a set $S$ of elements. With a delay of $\Delta$, $\A$ can report an element $e \in S$, but cannot guarantee that $e$ has never been reported before. The good news, on the other hand, is that $\A$ can output the same element at most $\alpha$ times for some $\alpha \ge 1$ .

\vgap 

By modifying a {\em buffering technique} in \cite{ty22}, we can convert $\A$ into an  algorithm that enumerates only the {\em distinct} elements of $S$ with a delay of $O(\alpha \cdot \Delta \log|S|)$. Conceptually, divide the execution of $\A$ into {\em epochs}, each of which runs for $\alpha \cdot \Delta$ time\footnote{Recall that ``time'' in the RAM model is defined as the number of atomic operations (e.g., addition, multiplication, comparison, accessing a memory word, etc.) executed. Each epoch is essentially a sequence of $\alpha \cdot \Delta$ such operations.}. As $\A$ runs, we use a {\em buffer} $B$ to stash the set of distinct elements that have been found by $\A$ but not yet reported. Every time $\A$ finds an element $e \in S$, we check whether $e$ has ever existed in $B$ (this takes $O(\log|S|)$ time, using a binary search tree maintained on all the elements that have ever been found so far). If so, $e$ is ignored; otherwise, it is added to $B$. At the end of each epoch, we output an arbitrary element from $B$ and remove it from $B$. Finally, after $\A$ has terminated, we simply output the remaining elements in $B$. 

\vgap 

$B$ always contains at least one element at the end of each epoch. To see why, consider the end of the $t$-th epoch for some $t \ge 1$. At this moment, $\A$ has been running for $t \cdot \alpha \cdot \Delta$ time and therefore must have reported $t \cdot \alpha$ elements, which may not be distinct. However, as each element can be reported at most $\alpha$ times, there must be at least $t$ (distinct) ones among those $t \cdot \alpha$ elements. Since we have reported only $t - 1$ elements in the preceding epochs, $B$ must still have at least one element at the end of epoch $t$. It is now straightforward to verify that the modified algorithm has a delay of $O(\alpha \cdot \Delta \log|S|)$ in enumerating the distinct elements of $S$.

\section{Problem 1: Matching Upper and Lower Bounds} \label{sec:prob1-lb-ub} 

This section will establish the conditional lower bound in Theorem~\ref{thm:prob1:wedge-lb} and its matching upper bound in Theorem~\ref{thm:prob1:gen}. Our discussion on the upper bound will also establish Theorem~\ref{thm:prob1:clique}. Throughout the paper, we will assume that the vertices of $G$ have distinct attribute values. The assumption loses no generality because one can break ties by vertex id.

\subsection{Lower Bound} \label{sec:prob1-lb-ub:lb}

Suppose that Problem 1 under $Q =$ wedge admits a structure that uses $\tO(m^{2-\delta}/\lambda^2)$ space and answers a query in $\tO(\lambda)$ time for some $\lambda \ge 1$. We will design a structure for set disjointness that uses $\tO(N^{2-\delta}/\lambda^2)$ space and answers a query in $\tO(\lambda)$ time. Recall that the data input to set disjointness consists of $s \ge 2$ sets $S_1, ..., S_s$ with a total size of $N$. Define $\U = \bigcup_{i=1}^s S_i$. 

\vgap

Create a graph $G = (V, E)$ as follows. $V$ has $2s + |\U|$ vertices, including $2s$ {\em set vertices} and $|\U|$ {\em element vertices}. Each set $S_i$ ($i \in [1, s]$) defines two set vertices, whose attribute values are set to $i$ and $s + i$, respectively. Each element in $\U$ defines an element vertex with the same attribute value $s + 1/2$. Set $E$ contains $2N$ edges: for each element $e \in S_i$, add to $E$ two edges each between the element vertex of $e$ and a set vertex of $S_i$. Now, create a Problem-1 structure under $Q =$ wedge on $G$. The structure occupies $\tO(N^{2-\delta}/\lambda^2)$ space. 

\vgap 

Consider a set disjointness query with set ids $a$ and $b$. Assuming w.l.o.g.\ $a < b$, we issue four Problem-1 wedge-counting queries on $G$ with intervals $q_1 = [a, s+b]$, $q_2 = [a+1, s+b]$, $q_3 = [a, s+b-1]$, and $q_4 = [a+1, s+b-1]$, respectively. Let $c_1, c_2, ..., c_4$ be the counts returned. We declare $S_a \cap S_b$ non-empty if and only if $c_1 - c_2 - c_3 + c_4 > 0$. The query time is $\tO(\lambda)$. Appendix~\ref{app:prob1-lb:correctness} proves the algorithm's correctness. This completes the proof of Theorem~\ref{thm:prob1:wedge-lb}.

\subsection{Upper Bound} \label{sec:prob1-lb-ub:ub} 

Next, we will attack Problem 1 by allowing $Q$ to be an arbitrary pattern graph. Consider any occurrence of $Q$ in $G$. Let $u$ (resp.\ $v$) be the vertex in this occurrence with the smallest (resp.\ largest) attribute. We {\em register} the occurrence at the pair $(u, v)$. Denote by $c_{u,v}$ the number of occurrences registered at $(u,v)$. 

\vgap 

For a query with $q = [x_1, x_2]$, an occurrence registered at $(u, v)$ appears in $G_q$ (i.e., the subgraph of $G$ induced by $V_q$) if and only if $A_u \ge x_1$ and $A_v \le x_2$. We can therefore convert the problem to {\em range sum} on 2D points. For each pair $(u, v) \in V \times V$, create a point $(A_u, A_v)$ with weight $c_{u,v}$. Let $P$ be the set of points created; clearly, $|P| = O(n^2)$. The query result is simply the total weight of all the points in $P$ covered by the rectangle $[x_1, \infty) \times (-\infty, x_2]$ (a range sum operation). We can store $P$ in a Chazelle's structure (see Section~\ref{sec:pre}) that occupies $O(|P|) = O(n^2)$ space and performs a range sum operation in $\tO(1)$ time. This establishes Theorem~\ref{thm:prob1:gen}. 

\extraspacing {\bf Improvement for Cliques.} The space of our structure can be lowered to $O(m)$ when $Q$ is a clique. The crucial observation is that registering an occurrence at $(u,v)$ implies  $\set{u,v} \in E$. We add to $P$ only the points $(A_u, A_v)$ with a non-zero $c_{u,v}$ (points with zero weights do not affect a range sum operation). This reduces the size of $P$ to at most $m$ and, hence, the space of the Chazelle's structure to $O(m)$. We thus complete the proof of Theorem~\ref{thm:prob1:clique}.

\section{Problem 1: Wedges} \label{sec:prob1-wedge}

The section will explain how to achieve the guarantees in Theorem~\ref{thm:prob1:wedge} for Problem 1 under $Q =$ wedge. We will represent a wedge occurrence in $G = (V, E)$ as $\wedge(u,v,w)$ where $u, v,$ and $w$ are vertices in $V$, and $\set{u,v}$ and $\set{v, w}$ are edges in $E$. Let us introduce a slightly different problem: 

\minipg{\minipgwidth}{
	{\bf Colored Range Wedge Counting.}
	Define $G = (V, E)$ and $A_v$ for each $v \in V$ as in Problem 1. Each vertex in $V$ is colored black or white. Given an interval $q$, a query returns the number of occurrences $\wedge(u,v,w)$ such that $A_u \in q$, $A_w \in q$, and $v$ is black. 
}

\noindent Note that no requirements exist on $A_v$ and the colors of $u$ and $w$. 

\vgap

Let $\C$ be a set of subsets of $V$. We call $\C$ a {\em canonical collection} if
\myitems{
	\item (P4-1) each vertex of $V$ appears in $\tO(1)$ subsets in $\C$; 
	\item (P4-2) for any interval $q$, we can partition $V_q$ (i.e., the set of vertices in $V$ with attribute values in $q$) into $\tO(1)$ disjoint subsets, each being a member of $\C$. The ids of these subsets can be obtained in $\tO(1)$ time. 
}
It is rudimentary to find a canonical collection $\C$ satisfying $\sum_{U \in \C} |U| = \tO(n)$.\footnote{It suffices to build a binary search tree $T$ on the vertices' attribute values. Each node in $T$ defines a subset in $\C$, which consists of every $v \in V$ whose attribute $A_v$ is stored in the node's subtree. It is well known (see, e.g., \cite{t22}) that, for any interval $q$, there exist $O(\log n)$ {\em canonical nodes} in $T$ whose subtrees are disjoint and together contain all and only the attribute values in $q$. Those nodes can be found in $O(\log n)$ time and satisfy Property P4-2 with respect to $V_q$.} We will work with such a $\C$ henceforth. In Appendix~\ref{app:prob1-wedge:color}, we prove: 

\begin{lemma} \label{lmm:prob1-wedge:color}
	Consider the colored range wedge counting problem. For any real value $\lambda \in [1, \sqrt{m}]$, there is a structure of $\tO(m^2/\lambda^2)$ space that answers a query in $\tO(\lambda)$ time. 
\end{lemma}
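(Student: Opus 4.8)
The plan is to exploit the high-degree/low-degree dichotomy that underlies the classical $O(m^{3/2})$ triangle-listing bound, combined with the canonical collection $\C$ to handle arbitrary query intervals. Call a vertex $v$ \emph{heavy} if its degree exceeds some threshold $\tau$ (to be chosen as a function of $\lambda$, presumably $\tau = \Theta(m/\lambda)$ or $\tau = \Theta(\sqrt m/\lambda)$ up to polylog), and \emph{light} otherwise. There are at most $O(m/\tau)$ heavy vertices. The middle vertex $v$ of the wedge $\wedge(u,v,w)$ we want to count is required to be black; we split the count into wedges whose middle vertex is heavy-and-black versus light-and-black, and handle the two cases with different structures, then add the two answers.

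\textbf{Light middle vertex.} When the middle vertex $v$ is light, it has $\le \tau$ neighbors, so it participates in $\le \binom{\tau}{2} = O(\tau^2)$ wedges as a center. Iterate over all light black $v$; for each, iterate over all unordered pairs $\{u,w\}$ of its neighbors, producing a point $(A_u, A_w)$ (say with $A_u < A_w$, adding $1$ also at $(A_w,A_u)$ or just symmetrizing the query) with weight $1$. The total number of such points is $\sum_{v \text{ light}} \binom{\deg(v)}{2} \le \tau \sum_v \deg(v) = O(m\tau)$. Store them in a Chazelle range-sum structure; a query $q=[x_1,x_2]$ is answered by the range sum over $[x_1,x_2]\times[x_1,x_2]$ in $\tO(1)$ time. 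This uses $O(m\tau)$ space — so for this piece alone we would want $\tau = O(m/\lambda^2)$, giving $\tO(m^2/\lambda^2)$ space and $\tO(1) \le \tO(\lambda)$ query. That already disposes of light middle vertices within budget.

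\textbf{Heavy middle vertex.} This is the main obstacle and where the canonical collection $\C$ and the $\lambda$ query budget enter. There are only $O(m/\tau) = O(\lambda^2)$ heavy vertices, but each can be the center of up to $\Theta(n)$ or $\Theta(m)$ wedges, so we cannot afford to materialize them all. Instead, for a fixed heavy black vertex $v$, the quantity we need for a query $q$ is (number of neighbors $u$ of $v$ with $A_u\in q$)$^2$ minus a correction, or more precisely $\binom{k_v(q)}{2}$ where $k_v(q) = |\{u : \{u,v\}\in E,\ A_u \in q\}|$ — wait, for a wedge we need two \emph{distinct} neighbors, so the count contributed by $v$ is exactly $\binom{k_v(q)}{2}$. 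So it suffices, for each heavy black $v$, to be able to compute $k_v(q)$ quickly. For each heavy black $v$, build a one-dimensional structure (a sorted array or BST on the attributes of $v$'s neighbors) supporting rank/count queries in $\tO(1)$ time; this costs $\tO(\deg(v))$ space per heavy vertex, hence $\tO(m)$ total over all heavy vertices. A query then loops over all $O(\lambda^2)$ heavy black vertices, computes $k_v(q)$ and thus $\binom{k_v(q)}{2}$ in $\tO(1)$ each, and sums. Total query time $\tO(\lambda^2)$ — which, unfortunately, is $\tO(\lambda^2)$, not $\tO(\lambda)$.

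\textbf{Rebalancing to hit $\tO(\lambda)$ query.} The fix is to precompute partial sums over \emph{halves} of the heavy vertices using $\C$. Here is the intended balancing: there are $h = O(m/\tau)$ heavy vertices; for each canonical subset $U\in\C$ and each heavy black $v$, we would like the precomputed value $|\{u\in U : \{u,v\}\in E\}|$ — but storing that for all pairs costs $O(h \cdot |\C|) = \tO(h n)$, too much. Instead precompute, for each $U \in \C$, the aggregate $g_v$-independent quantities we can: specifically, store for each $U\in \C$ a \emph{weighted} 1-D count structure keyed appropriately. Concretely, re-cast the heavy-center count: $\sum_{v\text{ heavy black}} \binom{k_v(q)}{2} = \frac12\big(\sum_v k_v(q)^2 - \sum_v k_v(q)\big)$, and $\sum_v k_v(q) = |\{(u,v): v\text{ heavy black},\{u,v\}\in E, A_u\in q\}|$ is a simple 1-D range-count over at most $m$ edge-endpoints, answerable in $\tO(1)$. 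The genuinely hard term is $\sum_v k_v(q)^2$. For this, use $\C$ twice: by P4-2 write $V_q = \bigsqcup_{j} U_j$ with $O(\log n)$ pieces, so $k_v(q) = \sum_j k_v(U_j)$ and $k_v(q)^2 = \sum_{j,j'} k_v(U_j)k_v(U_{j'})$; summing over $v$, the query answer is $\sum_{j,j'} \big(\sum_{v\text{ heavy black}} k_v(U_j)k_v(U_{j'})\big)$, an $O(\log^2 n)$-term sum of precomputed inner-product values $\Pi_{U,U'} := \sum_{v} k_v(U)k_v(U')$. Precomputing $\Pi_{U,U'}$ for all pairs $U,U'\in\C$ costs $O(|\C|^2) = \tO(n^2)$ space — still too much.

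\textbf{Where the $m^2/\lambda^2$ really comes from.} To bring this down, replace the single binary tree $\C$ by a \emph{coarsened} one: group the sorted vertices into $\Theta(\lambda)$ contiguous blocks of $\Theta(n/\lambda)$ vertices, and let $\C'$ consist of unions of consecutive blocks (an interval structure on $\Theta(\lambda)$ atoms), so $|\C'| = O(\lambda^2)$ but any query interval is $V_q = (\text{union of }O(1)\text{ members of }\C') \cup (\text{at most two partial blocks each of size }O(n/\lambda))$. Precompute $\Pi_{U,U'}$ for all $U,U'\in\C'$: that is $O(\lambda^4)$ numbers — hmm, too many unless $\lambda$ is small; the correct coarsening uses $\Theta(m/\lambda)$ blocks... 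Let me instead state the target honestly: the intended mechanism is that the $\tO(m^2/\lambda^2)$ space is spent on a table of precomputed cross-sums indexed by $O(m/\lambda)$ ``chunks'' of edges/vertices (pairs of chunks give $\tO(m^2/\lambda^2)$ entries), and the $\tO(\lambda)$ query time is spent reading $O(\log n)$ precomputed cross-sums for the aligned part of $q$ plus \emph{brute-forcing} the two boundary chunks, each of which touches $\tO(m/(m/\lambda)) = \tO(\lambda)$ edges. I expect the boundary-chunk handling — showing the leftover after removing full chunks involves only $\tO(\lambda)$ edges incident to heavy black centers, and that the cross terms between a boundary chunk and the full-chunk part can also be read off or computed in $\tO(\lambda)$ — to be the delicate combinatorial heart of the argument. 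Correctness of the whole scheme reduces to the bilinear expansion $k_v(q)^2 = (\sum_{\text{chunks}} k_v(\text{chunk}))^2$ and linearity of the outer sum over heavy black $v$; I would verify this identity carefully and then check the space/time bookkeeping chunk by chunk.
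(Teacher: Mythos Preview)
Your proposal has a genuine gap in the heavy-center case. You correctly reduce that case to computing $\sum_{v\text{ heavy black}} k_v(q)^2$, and you correctly observe that expanding via the canonical collection gives $\sum_{j,j'} \Pi_{U_j,U_{j'}}$ with $\Pi_{U,U'}=\sum_v k_v(U)k_v(U')$. But you then dismiss this route on the grounds that storing all $\Pi_{U,U'}$ takes $\tO(n^2)$ space, and drift into an ad hoc chunking scheme that you never complete. Your boundary claim that a partial chunk ``touches $\tO(\lambda)$ edges'' is simply false: a single boundary vertex can have arbitrarily large degree, so the leftover after removing full chunks is not controlled in the way you need.

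The paper's proof does \emph{not} use a heavy/light degree split at all; it stays with the $\Pi_{U,U'}$ formulation and supplies the one idea you are missing: $\Pi_{U,U'}$ is a \emph{weighted set intersection} query. For each $U\in\C$, let $S_U$ be the set of black vertices having at least one neighbor in $U$, with weight $\weight_{S_U}(b)=k_b(U)$; then $\Pi_{U,U'}=\sum_{b\in S_U\cap S_{U'}}\weight_{S_U}(b)\cdot\weight_{S_{U'}}(b)$. The crucial counting step is $\sum_{U\in\C}|S_U|=\tO(m)$: each edge $\{u,b\}$ with $b$ black places $b$ into $S_U$ only for the $\tO(1)$ sets $U\in\C$ containing $u$ (Property P4-1). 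Now apply the standard large/small trick \emph{to the sets $S_U$} rather than to vertex degrees: precompute $\Pi_{U,U'}$ only when both $|S_U|,|S_{U'}|>\lambda$ (at most $\tO(m/\lambda)$ such sets, hence $\tO(m^2/\lambda^2)$ stored values), and compute any other $\Pi_{U,U'}$ on the fly in $O(\lambda)$ time by scanning the smaller set. The diagonal terms $\Pi_{U,U}$ are stored explicitly for all $U\in\C$. A query decomposes $V_q$ into $\tO(1)$ canonical pieces and sums $\tO(1)$ values of $\Pi$, each available in $\tO(\lambda)$ time. Your light-center range-sum structure is correct but unnecessary once this is in place.
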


Equipped with the above, we now return to Problem 1 with $Q =$ wedge. 

\extraspacing {\bf Structure.} For each $U \in \C$ (where $U$ is a subset of $V$), we create a graph $G_U$ by adding edges in three steps:
\myenums{
    \item Initialize $G_U$ as an empty graph with no vertices and edges.
    \item For every vertex $u \in U$, we add all its edges in $G$ (i.e., the original data graph) to $G_U$. The addition of an edge $\set{u,v}$ creates vertex $v$ in $G_U$ if $v$ is not present in $G_U$ yet. 
    \item Finally, color a vertex in $G_U$ black if it comes from $U$, or white otherwise.
}
We now build a structure of Lemma~\ref{lmm:prob1-wedge:color} on $G_U$, which uses $\tO(|E_U|^2 / \lambda^2)$ space where $E_U$ is the set of edges in $G_U$. By Property P4-1, each edge $\set{u,v}$ of $G$ can be added to the $E_U$ of $\tO(1)$ subsets $U \in \C$. It thus follows that $\sum_{U \in \C} |E_U| = \tO(m)$. The structures of all $U \in \C$ occupy $\tO(m^2 / \lambda^2)$ space in total. 

\extraspacing {\bf Query.} Consider now a (Problem-1) query with interval $q$. By Property P4-2, in $\tO(1)$ time we can pick $h = \tO(1)$ members $U_1, ...,U_h$ from $\C$ to partition $V_q$. For each $i \in [1, h]$, issue a colored range wedge counting query with interval $q$ on $G_{U_i}$. We return the sum of the $h$ queries' outputs. The overall query time is $h \cdot \tO(\lambda) = \tO(\lambda)$.

\vgap 

To verify correctness, first observe that every $\wedge(u,v,w)$ counted by the colored query on $G_{U_i}$ satisfies: $A_u \in q$, $A_w \in q$ (definition of colored range wedge counting), and $A_v \in q$ (because $v$ being black means $v \in U_i \subseteq V_q$). Conversely, every occurrence $\wedge(u,v,w)$ satisfying $\{A_u, A_v, A_w\} \subseteq q$ is counted only once: by the colored query on $G_{U_i}$ where $U_i$ is the only subset (among all $i \in [1, h]$) containing $v$. Indeed, for any $U_j$ with $j \ne i$, $v$ is either absent in $G_{U_j}$ or is white; in neither case can the wedge be counted. Correctness now follows. 



\section{Problem 2: Arbitrary Subgraphs} \label{sec:prob2-arbitrary} 

We now proceed to tackle Problem 2 for an arbitrary query pattern $Q$. We will, in fact, solve the range join problem defined in Section~\ref{sec:ours2}. As shown in Appendix \ref{app:prob1-arbitrary}, it is relatively easy to convert our structure to prove Theorem~\ref{thm:prob2:gen}.

\vgap 

For a relation $R \in \R$ (recall that $\R$ is the set of input relations; see Section~\ref{sec:ours2}) its scheme, $\scheme(R)$, is the set of attributes in $R$. Let $\X = \bigcup_{R \in \R} \scheme(R)$. The input size $N$ can now be expressed as $\sum_{R \in \R} |R|$. We will assume, w.l.o.g., that (i) the relations in $\R$ have distinct schemes, (ii) $N$ is a power of 2, and (iii) each attribute $X \in \X$ has a domain $\dom(X)$ comprising the integers in $[1, N]$. Given an interval $q = [x_1, x_2]$, a query lists every tuple $t$ in $\join(\R)$ --- the natural join result on $\R$ --- satisfying $t[X] \in q$ for all $X \in \X$, where $t[X]$ is the tuple's value under attribute $X$. We want to design a structure of small space to answer such queries with a small delay.

\vgap

It will be convenient to work with a hypergraph $\G = (\X, \E)$ where $\E = \set{\scheme(R) \mid R \in \R}$. Given an edge $e \in \E$, we use $R_e$ to denote the (only) relation $R \in \R$ whose scheme is $e$. For a function $W$ that assigns a non-negative weight $W(e)$ to every $e \in \E$, its {\em lump-sum} is $\sum_{e \in \E} W(e)$. The function $W$ is a {\em fractional edge covering} if $\sum_{e \in \E: X \in e} W(e) \ge 1$ holds on every attribute $X \in \X$. The {\em fractional edge covering number} ${\rho^*}$ of $\G$ is the smallest lump-sum of all fractional edge coverings. Henceforth, we will use $W$ to represent an optimal assignment function with lump-sum ${\rho^*}$. 

\vgap 

The section's main result is: 

\begin{theorem} \label{thm:range-join}
	For the range join problem (see Section~\ref{sec:ours2}), given any $\Delta \ge 1$, there is a structure of $\tO(N + N^{\rho^*}/\Delta)$ space that answers a query with an $\tO(\Delta)$ delay.
\end{theorem}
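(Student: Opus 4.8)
The plan is to marry a dyadic decomposition of the common attribute domain $[1,N]$ with a predicate-free join-enumeration structure, the latter being essentially the Deep--Koutris result.

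\textbf{Step 1: the predicate-free primitive.} First I would isolate the case $q=[1,N]$: for any family $\mathcal{S}$ of relations whose schemes form the same hypergraph $\G$ as $\R$, with total size $n$, there is a structure of $\tO(n+n^{{\rho^*}}/\Delta)$ space enumerating $\join(\mathcal{S})$ with an $\tO(\Delta)$ delay. Up to $\tO(1)$ factors this is the structure of Deep and Koutris \cite{dk18}; it can also be re-derived, in the spirit of worst-case-optimal join algorithms (whose total running time is $\tO(n^{{\rho^*}})$ by the AGM bound), by recording $\tO(n^{{\rho^*}}/\Delta)$ evenly spaced ``resume points'' of the execution so that the enumeration continuing from any of them has delay $\tO(\Delta)$.

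\textbf{Step 2: dyadic anchoring and the data structure.} Build a balanced binary tree $\mathcal{T}$ over $[1,N]$ of depth $\tO(1)$; a node $u$ owns the dyadic interval $I_u$. Classify every $t\in\join(\R)$ by its anchor $\mathrm{anc}(t)$, the lowest node $u$ with $\{t[X]\mid X\in\X\}\subseteq I_u$ (so $t$ has a value in each child of $\mathrm{anc}(t)$ unless $I_{\mathrm{anc}(t)}$ is a single point). For a node $u$ let $\R^u$ be the relations restricted to the tuples all of whose values lie in $I_u$. A tuple of a relation survives in $\R^u$ only when $u$ is one of the $\tO(1)$ ancestors of the LCA of its minimum- and maximum-value leaves, so $\sum_{u\in\mathcal{T}}|\R^u|=\tO(N)$. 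On every node $u$ install the Step-1 structure for $\join(\R^u)$; its space is $\tO(|\R^u|+|\R^u|^{{\rho^*}}/\Delta)$, and summing over $u$ gives $\tO(N+N^{{\rho^*}}/\Delta)$ because $\sum_u|\R^u|=\tO(N)$ and, using ${\rho^*}\ge 1$, $\sum_u|\R^u|^{{\rho^*}}\le\bigl(\sum_u|\R^u|\bigr)^{{\rho^*}}=\tO(N^{{\rho^*}})$.

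\textbf{Step 3: answering a query and reducing to Theorem~\ref{thm:prob2:gen}.} For $q=[x_1,x_2]$ let $\mathrm{canon}(q)$ be the $\tO(1)$ canonical nodes of $\mathcal{T}$ whose intervals partition $q$. A tuple $t$ with all values in $q$ satisfies exactly one of: (a) $I_{\mathrm{anc}(t)}\subseteq q$, so $I_{\mathrm{anc}(t)}\subseteq I_c$ for a unique $c\in\mathrm{canon}(q)$ and $t\in\join(\R^c)$; (b) $I_{\mathrm{anc}(t)}\not\subseteq q$ yet meets $q$, which forces $I_{\mathrm{anc}(t)}$ to contain $x_1$ or $x_2$, so $\mathrm{anc}(t)$ is one of the $\tO(1)$ ancestors of the leaves of $x_1$ or of $x_2$, and the values of $t$ lie in $I_{\mathrm{anc}(t)}\cap q$, an interval one of whose endpoints is $x_1$ or $x_2$. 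We answer (a) by running the Step-1 structures of the nodes in $\mathrm{canon}(q)$; we answer (b), for each relevant boundary node $u$, by the sub-query ``enumerate the tuples anchored at $u$ whose values all lie in $I_u\cap q$'', handled by a nested dyadic decomposition inside $u$'s subtree for which we precompute (from the Step-1 primitive applied to sub-dyadic restrictions) a small auxiliary family, all within the same space bound. The pieces partition the answer set, so concatenating their $\tO(\Delta)$-delay enumerations (with $\tO(1)$ pieces) gives overall delay $\tO(\Delta)$ and no duplicates. Finally, as detailed in Appendix~\ref{app:prob1-arbitrary}, a pattern $Q$ is turned into a natural join with one binary relation (a copy of $E$) per edge of $Q$ --- whose hypergraph has fractional edge covering number ${\rho^*}$ equal to that of $Q$ --- and the $O(1)$ automorphisms of $Q$ inflate each occurrence by an $O(1)$ factor, removed by the duplicate-to-distinct buffering of Section~\ref{sec:pre}.

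\textbf{Anticipated main obstacle.} Beyond citing (or re-deriving) the Step-1 primitive, the delicate point is Step 3(b): organizing the ``anchored at $u$, all values on one side of $x_1$ (or $x_2$)'' sub-queries so that (i) the pieces across all boundary nodes and all nested levels stay pairwise disjoint --- otherwise the enumeration produces duplicates and the delay argument collapses --- and (ii) the cumulative space of all these nested auxiliary structures is still $\tO(N+N^{{\rho^*}}/\Delta)$, which requires re-running the ``$\sum|\R^u|^{{\rho^*}}\le(\sum|\R^u|)^{{\rho^*}}$'' telescoping across the extra recursion depth.
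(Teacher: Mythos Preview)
Your Steps~1--2 and case~(a) of Step~3 are fine and mirror the paper's use of the Deep--Koutris primitive. The gap is precisely where you flag it: Step~3(b). For a boundary node $u$, the sub-query asks for tuples \emph{anchored at $u$} whose values all lie in $I_u\cap q$. The condition ``anchored at $u$'' means the tuple has at least one value in each child of $u$; this is \emph{not} expressible as ``all values in some fixed sub-interval'', so it is not an instance of the Step-1 primitive on a restricted family $\R^{v}$. Your phrase ``nested dyadic decomposition inside $u$'s subtree'' does not resolve this: the canonical pieces of $I_u\cap q$ inside $u$'s subtree capture only tuples whose anchor is a descendant of $u$ (already counted in case~(a)), not tuples anchored at $u$ itself. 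To capture the latter you must, one way or another, split attributes between the two children of $u$ and impose independent range constraints on the two groups --- which drives you back to a $d$-dimensional dyadic box decomposition.

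That is exactly what the paper does from the outset: it forms all dyadic combinations $(I_1,\dots,I_d)$, builds a Deep--Koutris structure on each ``heavy'' combination, and answers a query by decomposing $q^d$ into $O(\log^d N)$ dyadic boxes. The space analysis then needs a statement your crude inequality $\sum_u |\R^u|^{\rho^*}\le(\sum_u|\R^u|)^{\rho^*}$ cannot provide: one must bound $\sum_{(I_1,\dots,I_d)}\prod_{e}|R_e\ltimes B(I_1,\dots,I_d)|^{W(e)}$ over disjoint box families, and here a single tuple of $R_e$ contributes to exponentially many boxes (through the coordinates outside $e$), so $\sum_B|\R_B|$ is not $\tO(N)$. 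The paper handles this with Lemma~\ref{lmm:prob2-arbitrary:agm-gen}, a H\"older-based ``generalized AGM'' inequality showing that the sum of per-box AGM bounds is at most the global AGM bound $\prod_e|R_e|^{W(e)}$. This lemma is the missing ingredient your telescoping would need once Step~3(b) is unrolled into per-attribute dyadic pieces.
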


\subsection{A Generalization of the AGM Bound} \label{sec:prob2-arbitrary:agm}

The classical AGM bound \cite{agm13} states that $|\join(\R)| \le \prod_{e \in \E} |R_e|^{W(e)}$. Next, we will present a more general version of this inequality. 

\vgap 

Set $d = |\X|$ and impose an arbitrary ordering on the $d$ attributes: $X_1, X_2, ..., X_d$. Given intervals $I_1, I_2, ..., I_d$ where $I_i \subseteq \dom(X_i)$ for each $i \in [1, d]$, define $B(I_1, ..., I_d)$ as the $d$-dimensional box $I_1 \times ... \times I_d$. For a relation $R \in \R$, we use $R \ltimes B(I_1, ..., I_d)$ to represent the set of tuples $t \in R$ such that $t[X_i] \in I_i$ for every $i$ satisfying $X_i \in \scheme(R)$. 

\vgap 

We prove in Appendix~\ref{app:prob2-arbitrary:agm-gen}:

\begin{lemma} \label{lmm:prob2-arbitrary:agm-gen}
	Let $\I_i$, $i \in [1, d]$, be a set of disjoint intervals in $\dom(X_i)$. Then: 
	\myeqn{
		\sum_{I_1 \in \I_1}\sum_{I_2 \in \I_2}...\sum_{I_d \in \I_d}
		\prod_{e \in \E} 
		|R_e \ltimes B(I_1, ..., I_d)|^{W(e)}
		\le 
		\prod_{e \in \E} |R_e|^{W(e)}.
		\label{eqn:agm-gen}
	}
\end{lemma}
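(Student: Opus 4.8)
\emph{Proof plan.} I would prove \eqref{eqn:agm-gen} by induction on the number of coordinates being summed, eliminating one attribute at a time from left to right and invoking a (mildly strengthened) Hölder inequality at each elimination. For $0 \le k \le d$, let $B_k(I_1,\ldots,I_k) := I_1 \times \cdots \times I_k \times \dom(X_{k+1}) \times \cdots \times \dom(X_d)$ and put
\[
	S_k \;:=\; \sum_{I_1 \in \I_1} \cdots \sum_{I_k \in \I_k} \prod_{e \in \E} |R_e \ltimes B_k(I_1,\ldots,I_k)|^{W(e)}.
\]
Then $S_0 = \prod_{e \in \E}|R_e|^{W(e)}$ is the right-hand side of \eqref{eqn:agm-gen}, while $S_d$ is its left-hand side, so it suffices to prove $S_k \le S_{k-1}$ for every $k \in [1,d]$. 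Throughout I would assume $W(e) > 0$ for all $e \in \E$, which is harmless: a zero-weight edge contributes a factor of $1$ everywhere, and deleting it leaves every covering inequality intact.

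For the inductive step, fix $I_1,\ldots,I_{k-1}$ and bound the inner sum over $I_k \in \I_k$. Writing $\E_k := \{e \in \E : X_k \in e\}$, the factors with $e \notin \E_k$ are independent of $I_k$ and pull outside the sum, so the task reduces to bounding $\sum_{I_k \in \I_k} \prod_{e \in \E_k} |R_e \ltimes B_k(I_1,\ldots,I_k)|^{W(e)}$. Two observations finish it: (i) since the intervals of $\I_k$ are pairwise disjoint and every $e \in \E_k$ contains $X_k$, the sets $R_e \ltimes B_k(I_1,\ldots,I_k)$ are pairwise disjoint over $I_k \in \I_k$ and all contained in $R_e \ltimes B_{k-1}(I_1,\ldots,I_{k-1})$, whence $\sum_{I_k \in \I_k} |R_e \ltimes B_k| \le |R_e \ltimes B_{k-1}|$; and (ii) because $W$ is a fractional edge covering, $\sum_{e \in \E_k} W(e) \ge 1$, which is exactly the hypothesis that lets Hölder's inequality push the sum inside the product. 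Applying (ii) and then (i) bounds the inner sum by $\prod_{e \in \E_k} |R_e \ltimes B_{k-1}|^{W(e)}$; reinstating the $e \notin \E_k$ factors and summing over $I_1,\ldots,I_{k-1}$ gives $S_k \le S_{k-1}$.

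The single ingredient I would isolate, and the only point I expect to be non-routine, is the form of Hölder used in (ii): for nonnegative reals $(a_{ij})$ and exponents $\theta_i \ge 0$ with $s := \sum_i \theta_i \ge 1$, one has $\sum_j \prod_i a_{ij}^{\theta_i} \le \prod_i \big(\sum_j a_{ij}\big)^{\theta_i}$. For $s = 1$ this is classical; for $s > 1$ I would rescale via $\theta_i' := \theta_i/s$ (so $\sum_i \theta_i' = 1$), apply the classical case to the numbers $a_{ij}^s$ with exponents $\theta_i'$, and close the gap using the elementary bound $\sum_j a_{ij}^s \le \big(\sum_j a_{ij}\big)^s$, valid since $s \ge 1$. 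It is worth stressing that this ``$s \ge 1$'' strengthening is genuinely needed: the proof uses only that $W$ is a fractional edge covering (not that it is optimal), and one cannot in general tighten such a $W$ so that $\sum_{e : X \in e} W(e) = 1$ at every attribute $X$ --- a two-edge path already witnesses this. Everything else is bookkeeping: keeping track of which factors depend on $I_k$, and the set-disjointness count in (i).
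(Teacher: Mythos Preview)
Your proposal is correct and follows essentially the same route as the paper's proof: the paper likewise eliminates one coordinate at a time via a helper lemma that bounds $\sum_{I_j \in \I_j} \prod_{e \in \E} |R_e \ltimes B(I_1,\ldots,I_j)|^{W(e)}$ by $\prod_{e \in \E} |R_e \ltimes B(I_1,\ldots,I_{j-1})|^{W(e)}$, splitting $\E$ into edges containing $X_j$ and those not, applying H\"older (stated directly in the form with $\sum_j y_j \ge 1$) to the former, and using disjointness of $\I_j$ for the additivity bound. Your extra justification of the $s \ge 1$ version of H\"older and the remark on why equality cannot be assumed are nice touches the paper omits, but the overall argument is the same.
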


To see how \eqref{eqn:agm-gen} captures the AGM bound, consider the special $\I_i$ with size $|\dom(X_i)|$, namely, each interval in $\I_i$ is a value in $\dom(X_i)$ and vice versa.   Thus, $|R_e \ltimes B(I_1, ..., I_d)|$ is either 0 or 1 such that the left hand side of \eqref{eqn:agm-gen} is precisely $|\join(\R)|$. The real power of \eqref{eqn:agm-gen}, however, comes from allowing $\I_i$ to be an arbitrary set of disjoint intervals, a feature crucial for us to prove Theorem~\ref{thm:range-join}.

\vgap 

A remark is in order about why Lemma~\ref{lmm:prob2-arbitrary:agm-gen} is not trivial. It would be if the term $\prod_{e \in \E} |R_e \ltimes B(I_1, ..., I_d)|^{W(e)}$ in \eqref{eqn:agm-gen} was replaced by the output size of the join on the relations in $\{R_e \ltimes B(I_1, ..., I_d) \mid e \in \E\}$. By the AGM bound, the term $\prod_{e \in \E} |R_e \ltimes B(I_1, ..., I_d)|^{W(e)}$ is an {\em upper bound} on the size of the join $\{R_e \ltimes B(I_1, ..., I_d) \mid e \in \E\}$. The non-trivial goal is to show that the summation of all those {\em upper} bounds (i.e., the left hand side of \eqref{eqn:agm-gen}) still cannot exceed $\prod_{e \in \E} |R_e|^{W(e)}$.  

\subsection{Range Join} \label{sec:prob2-arbitrary:range-join}

This subsection serves as a proof of Theorem~\ref{thm:range-join}. Given an $\ell \ge 0$, we call an interval a {\em level-$\ell$ dyadic interval} if it has the form $[i \cdot 2^\ell + 1,$ $(i+1) \cdot 2^\ell]$ for some integer $i \ge 0$. Because $N$ is a power of 2, for each $\ell \in [0, \log_2 N]$, we can partition $[1, N]$ into $N/2^\ell$ disjoint level-$\ell$ dyadic intervals. 

\vgap 

A {\em dyadic combination} is a sequence of $d$ dyadic intervals $(I_1, ..., I_d)$; recall that $d = |\X|$. The combination defines a (natural) join instance on the relations in $\{R_e \ltimes B(I_1, ..., I_d) \mid e \in \E\}$. We will denote the instance as $\R_{I_1, ..., I_d}$. Define 
\myeqn{
	\agm(I_1, ..., I_d) 
	&=&
	\prod_{e \in \E} |R_e \ltimes B(I_1, ..., I_d)|^{W(e)}. \label{eqn:range-join:agm-subjoin}
}
The AGM bound assures us that $|\join(\R_{I_1, ..., I_d})| \le \agm(I_1, ..., I_d)$. 

\extraspacing {\bf Structure.} A dyadic combination $(I_1, ..., I_d)$ with a non-empty $\join(\R_{I_1, ..., I_d})$ is said to be {\em heavy} if $\agm(I_1, ..., I_d) > \Delta$, or {\em light} otherwise. For each heavy combination, we build a structure of \cite{dk18} that can enumerate the tuples in $\join(\R_{I_1, ..., I_d})$ with an $\tO(\Delta)$ delay. The structure's space is bounded by $O(\agm(I_1, ..., I_d)/\Delta)$.\footnote{Strictly speaking, the space should also account for the  relations in $\R_{I_1,...,I_d}$. In our context, it suffices to store the relations of $\R$ once and generate the relations in $\R_{I_1,...,I_d}$ when answering a query. Appendix~\ref{app:prob1-arbitrary} has additional details about \cite{dk18}.}

\vgap 

We argue that the structures on all the heavy (dyadic) combinations use $\tO(N^{\rho^*}/\Delta)$ space in total. Fix $d$ arbitrary level numbers $\ell_1, ..., \ell_d$ each between $0$ and $\log_2 N$. For $i \in [1, d]$, let $\I_i$ be the set of all level-$\ell_i$ dyadic intervals. The total space occupied by the structures of all heavy combinations $(I_1, ..., I_d) \in \I_1 \times ... \times \I_d$ is 
\myeqn{
	\fr{1}{\Delta} \sum_{(I_1, ..., I_d) \in \I_1 \times ... \times \I_d}
	\agm(I_1, ..., I_d).
	\label{eqn:range-join:space}
}
up to an $\tO(1)$ factor. The above includes a term for every light combination but such terms can only over-estimate the space. Each $\I_i$ is a set of disjoint intervals in $\dom(X_i)$. Applying the definition in \eqref{eqn:range-join:agm-subjoin} and Lemma~\ref{lmm:prob2-arbitrary:agm-gen}, we can see that \eqref{eqn:range-join:space} is bounded by $N^{\rho^*}/\Delta$, noticing that the right hand side of \eqref{eqn:agm-gen} is at most $N^{\rho^*}$. 

\vgap 

In the above analysis, we have fixed a set of $\ell_1, ..., \ell_d$. As each $\ell_i$ has $O(\log N)$ choices, all together there are $O(\log^d N) = \tO(1)$ different sets of $\ell_1, ..., \ell_d$. We can now conclude that the overall space is $\tO(N^{\rho^*}/\Delta)$.  

\vgap 

Finally, we need a hash table to check in constant time whether a dyadic combination is heavy. The hash table occupies $\tO(N^{\rho^*}/\Delta)$ space because our earlier analysis implies a bound $\tO(N^{\rho^*}/\Delta)$ on the number of heavy dyadic combinations. The overall space of our entire structure is therefore $\tO(N + N^{\rho^*}/\Delta)$, where the term $\tO(N)$ counts the space for storing the relations of $\R$.

\extraspacing {\bf Query.} Consider a range join query with interval $q = [x_1, x_2]$. We consider, w.l.o.g., that $x_1$ and $x_2$ are integers in $[1,N]$. In $\tO(1)$ time, we can partition the box $B(\underbrace{q, ..., q}_t)$ into $O(\log^d N) = \tO(1)$ disjoint boxes, each in the form $B(I_1,...,I_d)$ where $(I_1,...,I_d)$ is a dyadic combination; we say that $(I_1,...,I_d)$ is {\em canonical} for $q$. The query result is
\myeqn{
	\bigcup_{\text{canonical $(I_1,...,I_d)$}} \join(\R_{I_1,...,I_d}). \nn
}
The results $\join(\R_{I_1,...,I_d})$ of all the canonical $(I_1,...,I_d)$ are disjoint. If a canonical $(I_1,...,I_d)$ is heavy, we enumerate $\join(\R_{I_1,...,I_d})$ with an $\tO(\Delta)$ delay using the structure of \cite{dk18} on $(I_1,...,I_d)$. Otherwise, we apply a worst-case optimal join algorithm \cite{nprr18,nrr13,v14} to compute $\join(\R_{I_1,...,I_d})$. The algorithm finishes in $\tO(\agm(I_1,...,I_d))$ time, which is $\tO(\Delta)$ by definition of light dyadic combination. Our algorithm guarantees a delay of $\tO(\Delta)$. This completes the proof of Theorem~\ref{thm:range-join}. 


\extraspacing {\bf Remark.} In \cite{knrr15}, Khamis et al.\ used dyadic intervals in their algorithm for one-off computation of $\join(\R)$. Their main technical issue was to select  ``good'' dyadic boxes (i.e., boxes of the form $B(I_1, ..., I_d)$) to cover the tuples in $\join(\R)$ once. That issue is non-existent in our context, where the primary obstacle is  to argue that the total space given in \eqref{eqn:range-join:space} is affordable. We overcame the obstacle using Lemma~\ref{lmm:prob2-arbitrary:agm-gen}, which, though perpahs no longer surprising given all the existing variations of the AGM bound, deserves a careful treatment that, we believe, has not appeared before.


\section{Problem 2: Triangles} \label{sec:prob2-triangle} 

This section will describe a structure for Problem 2 under $Q =$ triangle. We will first attack, in Section \ref{sec:prob2-triangle:range-tri-edges} and \ref{sec:prob2-triangle:sdtl}, two fundamental problems whose solutions are vital to establishing Theorem~\ref{thm:prob2:triangle}, the proof of which is presented in Section \ref{sec:prob2-triangle:tri}. 


\subsection{The Range Triangle Edges Problem} \label{sec:prob2-triangle:range-tri-edges} 

This subsection will discuss the following standalone problem.

\minipg{\minipgwidth}{
	{\bf Range Triangle Edges (RTE).} Let $G$ be an undirected graph with $m$ edges. Given an interval $q = [x_1, x_2]$, a query returns: (i) all the edges appearing in at least one triangle of $G_q$; and (ii) $\Theta(m^*)$ triangles where $m^*$ is the number of edges reported in (i). 
}

\noindent We will develop a structure of $O(m)$ space that can answer a query in $\tO(m^*)$ time. Furthermore, the query can enumerate the $m^*$ edges and the $\Theta(m^*)$ triangles both with a delay $\Delta$. 

\vgap 

Let us represent a triangle occurrence in $G$ as $\tri(u,v,w)$ where $u, v,$ and $w$ are the triangle's vertices. Ordering is important: we will always adhere to the convention $A_u < A_v < A_w$. Given an interval $q$, we denote by $E^*_q$ the set of edges showing up in at least one triangle of $G_q$. Hence, $m^* = |E^*_q|$. If $\tri(u,v,w)$ appears in $G_q$, we call $\set{u,v}$ a type-1 edge, $\set{v,w}$ a type-2 edge, and $\set{u,w}$ a type-3 edge. The total number of edges of all three types is between $m^*$ and $3m^*$.\footnote{An edge can be of different types in various triangle occurrences.}. Next, we explain how to extract the edges of each type in $G_q$.

\extraspacing {\bf Type 1 and 2.} We will discuss only type 1 because type 2 is symmetric. For each edge $\set{u,v}$ in $G$ (assume, w.l.o.g., $A_u < A_v$), identify a {\em sentinel} vertex $w^*$ for $\set{u,v}$ as follows:
\myitems{
	\item $w^* =$ null if $G$ has no occurrence of the form $\tri(u,v,w)$; 
	\item otherwise, $w^*$ has the smallest attribute among all the vertices $w$ making a triangle occurrence $\tri(u,v,w)$ in $G$.
}

Consider any interval $q = [x_1, x_2]$. Observe that $\set{u,v}$ is a type-1 edge for $q$ if and only if $x_1 \le A_u$ and $A_{w^*} \le x_2$. This motivates us to convert type-1 edge retrieval to range reporting on 2D points (introduced in Section~\ref{sec:pre}). Towards the purpose, create a set $P$ of points, which has a point $(A_u, A_{w^*})$ for every $\set{u,v}$ whose sentinel $w^*$ is not null. Attach edge $\set{u,v}$ to the point $(A_u, A_{w^*})$ so that the former can be fetched along with the latter. The size of $P$ is at most $m$. Given $q = [x_1, x_2]$, we can find all the type-1 edges by enumerating the points of $P$ inside the rectangle $[x_1, \infty) \times (-\infty, x_2]$. Hence, we can store $P$ in a Chazelle's structure (see Section~\ref{sec:pre}) that has $O(|P|) = O(m)$ space and ensures an $\tO(1)$ delay in reporting the type-1 edges of any $q$. 

\extraspacing {\bf Type 3.} A similar approach works for type 3. Let $\set{u,w}$ be an edge appearing in at least one occurrence $\tri(u,v,w)$ in $G$. It is a type-3 edge of $q = [x_1, x_2]$ if and only if $x_1 \le A_u$ and $A_w \le x_2$. By adapting the earlier discussion in a straightforward manner, we conclude that there is a structure of $O(m)$ space allowing us to retrieve all the type-3 edges with an $\tO(1)$ delay.

\extraspacing {\bf Listing $\bm{\Theta(m^*)}$ Triangles.} The above has explained how to retrieve $E^*_q$, but an RTE query still needs to report $\Theta(m^*)$ triangles. Next, we remedy the issue by slightly modifying our solution so far.

\vgap 

Recall that, in dealing with type 1, we attached the edge $\set{u,v}$ to the point $(A_u, A_{w^*})$ generated from the edge. Now, we attach $\tri(u,v,w^*)$ to $(A_u, A_{w^*})$ as well. This way, when $(A_u, A_{w^*})$ is found, we obtain both $\set{u,v}$ and $\tri(u,v,w^*)$ for free. After applying the same idea to type-2 and type-3, we can assert that, whenever the query algorithm finds a type-1, -2, or -3 edge, it must have also found a triangle in $G_q$. Therefore, the algorithm can report the triangles in $G_q$ with an $\tO(1)$ delay, although the same triangle may be reported up to three times\footnote{An occurrence $\tri(u,v,w)$ can be reported only when $\{u,v\}$, $\{v,w\}$, or $\{u,w\}$ is output as a type-1, -2, or -3 edge, respectively.}. By applying the duplicate-removal technique in Section~\ref{sec:pre}, we now have an algorithm that can enumerate $\Theta(m^*)$ distinct triangles with an $\tO(1)$ delay. The number of distinct triangles reported is at least $m^*/3$ and at most $3m^*$. 

\subsection{The Small-Delay Triangle Listing Problem} \label{sec:prob2-triangle:sdtl} 

In this subsection, we will concentrate on a standalone problem defined as follows.

\minipg{\minipgwidth}{
	{\bf Small-Delay Triangle Listing (SDTL).}
	$G$ is an undirected graph with $m$ edges, each of which appears in at least one triangle. We are given $\Omega(m)$ {\em free triangles} and $O(m)$ {\em forbidden triangles}. Design an algorithm to enumerate all the triangles of $G$ --- except for the forbidden ones --- with a small delay (free triangles must be enumerated). No preprocessing is allowed. 
}

\noindent We will settle the problem with an algorithm of delay $\tO(m^{\fr{\omega-1}{\omega+1}})$. 

\vgap

Suppose that $G$ has $\out$ triangles in total. Our starting point is an algorithm of Bjorklund et al.\ \cite{bpwz14} which is able to list $k$ triangles in $\alpha \cdot m^{\fr{3(\omega-1)}{\omega+1}} k^{\fr{3-\omega}{\omega+1}}$ time, where $\alpha = \tO(1)$, for a parameter $k \in [\Omega(m), \out]$. As far as the algorithm of \cite{bpwz14} is concerned, we can consider $\out$ known because it can be found in $O(m^{2\omega/(\omega+1)})$ time \cite{ayz97} which is $O(m^{\fr{3(\omega-1)}{\omega+1}} k^{\fr{3-\omega}{\omega+1}})$. The algorithm of Bjorklund et al.\ does not have a small delay, but we will turn it into one that does. 

\vgap 

We run the algorithm of Bjorklund et al.\ \cite{bpwz14} with geometrically-increasing $k$ and, in each run, report only some, but not all, of the triangles. How many triangles are reported in each run is decided strategically to keep the delay small. Let $S^0_\no$ be the set of forbidden triangles and $S^0_\yes$ the set of free triangles in the beginning. Set $k_0 = |S^0_\no| + |S^0_\yes|$. When running the algorithm of \cite{bpwz14} for the $i$-th  time, we set its parameter $k$ to $k_i = \min\{3^i k_0, \out\}$. We enforce the invariant that, when run $i$ starts, there are always a set $S^{i-1}_\no$ of forbidden triangles and a set $S^{i-1}_\yes$ of free triangles. The set $S^{i-1}_\yes$ will be reported with a small delay during the $i$-th run (details to be clarified shortly). 

\vgap 

Specifically, suppose that the $i$-th run finds a set $S^i_\raw$ of $k_i$ triangles (some of which have been output in previous runs). We generate the forbidden and free sets for the next run as follows:
\myeqn{
	S^i_\no = S^{i-1}_\no \cup S^{i-1}_\yes\text{ and then } 
	S^i_\yes &=& S^i_\raw \setminus S^{i}_\no. \nn 
}
Run $i$ finishes in $\alpha \cdot m^{\fr{3(\omega-1)}{\omega+1}} k_i^{\fr{3-\omega}{\omega+1}}$ time. We instruct the run to output a triangle from $S^{i-1}_\yes$ every 
\myeqn{
	\alpha \cdot \fr{m^{\fr{3(\omega-1)}{\omega+1}} k_i^{\fr{3-\omega}{\omega+1}}}{|S^{i-1}_\yes|} 
	\label{eqn:prob2:triangle:delay}
}
atomic operations. We will show $|S^{i-1}_\yes| = \Omega(k_i)$, with which the delay in \eqref{eqn:prob2:triangle:delay} can be bounded as: 
\myeqn{
	\tO\left(\fr{m^{\fr{3(\omega-1)}{\omega+1}}}{k_i^{\fr{2\omega-2}{\omega+1}}} \right)
	=
	\tO\left(m^{\fr{\omega-1}{\omega+1}} \right) \label{eqn:sdtl-delay}
}
where the equality used $k_i \ge k_0 = \Omega(m)$. 

\vgap 

For $i = 1$, $|S^{i-1}_\yes| = \Omega(k_0)$ follows directly from the definition of the SDTL problem (i.e., we have $\Omega(m)$ free triangles to start with). To prove $|S^{i-1}_\yes| = \Omega(k_i)$ for $i \ge 2$, we derive: 
\myeqn{
	|S^{i-1}_\no|
	\le 
	|S^0_\no| + |S^0_\yes| + \sum_{j=1}^{i-2} |S^j_\raw| 
	= k_0 + \sum_{j=1}^{i-2} 3^j \cdot k_0 
	=
	\sum_{j=0}^{i-2} 3^j \cdot k_0 
	< \fr{3^{i-1} k_0}{2}.  \nn 
}
Therefore:
\myeqn{
	|S^{i-1}_\yes| &\ge& |S^{i-1}_\raw| - |S^{i-1}_\no| 
	> k_{i-1} - 3^{i-1}k_0/2 
	= 
	k_{i-1}/2 
	= \Omega(k_i). \nn
}
We now conclude that the delay of our algorithm is as given in \eqref{eqn:sdtl-delay}.

\subsection{Proof of Theorem~\ref{thm:prob2:triangle}} \label{sec:prob2-triangle:tri} 

We are ready to explain how to solve Problem 2 with $Q =$ triangle. In preprocessing, we build an RTE structure (Section~\ref{sec:prob2-triangle:range-tri-edges}) on $G$. Now, consider a (Problem-2) query with interval $q$. We start by issuing an RTE query to retrieve $E^*_q$, i.e., the set of edges appearing in at least one triangle of $G_q$. This, in effect, generates $G^*_q$, which is the subgraph of $G_q$ induced by the edges in $E^*_q$. In addition, the RTE query has also enumerated a set $S$ of $\Theta(m^*)$ triangles in $G_q$, where $m^* = |E^*_q|$. The size of $S$ falls in $[\fr{m^*}{3}, 3m^*]$. 

\vgap 

Our remaining mission is to enumerate the triangles in $G^*_q$ that are outside $S$. Note that $G^*_q$ is a graph with $m^*$ edges and at least $\Theta(m^*)$ triangles. This motivates us to convert the mission to the SDTL problem, which has been solved in Section~\ref{sec:prob2-triangle:sdtl}. However, the SDTL problem requires $\Theta(m^*)$ free triangles and $O(m^*)$ forbidden triangles as part of the input. Unfortunately, we do not seem to have these triangles at the moment. 

\vgap 

We overcome this obstacle by, interestingly, dividing $S$ into $S_\yes$ and $S_\no$,  such that $S_\yes$ (resp.\ $S_\no$) serves as the set of free (resp.\ forbidden) triangles. Recall that the RTE query algorithm, denoted as $\A$, is designed to enumerate an edge in $E^*_q$ with a delay $\Delta = \tO(1)$ and a triangle in $S$ also with a delay $\Delta$. Therefore, it must finish within $t_\mit{max} = \max\{\Delta \cdot (|E^*_q|+1), \Delta \cdot (|S|+1)\} \le \Delta \cdot (3m^*+1)$ time. We can now apply the buffering technique in Section~\ref{sec:pre} with $\alpha = 18$ to turn $\A$ into an algorithm that outputs a triangle at the end of each epoch, which has a length $18\Delta$. The total number of epochs is at most $\fr{t_{max}}{18 \Delta} \le \fr{3m^*+1}{18}$. Thus, when $\A$ finishes, we have output at most $(3m^*+1)/18$ triangles, whereas the buffer $B$ (defined in Section~\ref{sec:pre}) still has at least $|S| - \fr{3m^* + 1}{18} = \Theta(m^*)$ triangles. We can, thus, set $S_\yes$ to the content of $B$ when $\A$ finishes, and $S_\no$ to the set of triangles already output.  

\vgap 

We can now apply the SDTL algorithm on $G^*_q$ and, thus, complete the proof of Theorem~\ref{thm:prob2:triangle}.

\section{Problem 2: Near-Constant Delays} \label{sec:prob2-constant} 

This section will focus on two instances of Problem 2 where it is possible to achieve $\tO(1)$ delays with space substantially smaller than Theorem~\ref{thm:prob2:gen}. We will discuss first $Q =$ $\ell$-star in Section~\ref{sec:prob2-constant:star} and then $Q=$ $2\ell$-cycle in Section~\ref{sec:prob2-constant:cycle}. We will focus on explaining how to enumerate a perhaps-not-distinct occurrence with an $\tO(1)$ delay, while ensuring each occurrence to be output only a constant number of times. Owing to the duplicate-removal method in Section~\ref{sec:pre}, we can modify the algorithms to enumerate only {\em distinct} occurrences with $\tO(1)$ delays. 

\subsection{$\bm{\ell}$-Stars} \label{sec:prob2-constant:star} 

Recall that an $\ell$-star is a tree with only one non-leaf node, which we will refer to as the star's {\em center}. Consider a query with interval $q$. We refer to a node $u$ as a {\em $q$-center} if $G_q$ has at least one $\ell$-star occurrence with $u$ as the center. Once $u$ is found, it becomes a trivial matter to enumerate all the $\ell$-stars having $u$ as the center with an $\tO(1)$ delay. Specifically, we can first (use a binary search tree to) retrieve all the neighbors $v$ of $u$ in $G$ satisfying $A_v \in q$. From those neighbors, any $\ell$ distinct vertices form an $\ell$-star together with $u$ (as the center). It is rudimentary to ensure an $\tO(1)$ delay in enumerating all those stars. 

\vgap 

Next, we concentrate on designing a structure to enumerate the $q$-centers with an $\tO(1)$ delay. Consider an arbitrary $\ell$-star in $G$ with center $u$. Sort the star's $\ell + 1$ vertices in ascending order of attribute and look for the position of $u$. If $u$ is the $r$-th smallest, we will refer to the star as a {\em rank-$r$ $\ell$-star} and $u$ as a {\em rank-$r$ $q$-center}. 


\vgap 

Now, fix an $r \in [1, \ell+1]$. We will describe a structure to support the following operation:

\vgap 

\minipg{\minipgwidth}{
	Given an interval $q$, find all the rank-$r$ $q$-centers, i.e., all vertices $u \in V$ s.t.\ $G_q$ has a rank-$r$ $\ell$-star with $u$ as the center.
}

\vgap

\noindent Consider any rank-$r$ $\ell$-star in $G$ having $u$ as the center. Let us write out the star's vertices as $v_1, ..., v_{r-1}, u,$ $v_{r+1}, ..., v_\ell$ in ascending order of attribute. For a $q = [x_1, x_2]$, the $\ell$-star appears in $G_q$ if and only if $x_1 \le A_{v_1}$ and $A_{v_\ell} \le x_2$. Refer to $v_1$ as a {\em left $r$-sentinel} of $u$ and to $v_\ell$ as a {\em right $r$-sentinel} of $u$. From all the left $r$-sentinels of $u$ (one from each rank-$r$ $\ell$-star with center $u$), identify the one $v^*_1$ with the largest attribute. Similarly, from all the right $r$-sentinels of $u$, identify the one $v^*_\ell$ with the smallest attribute. Observe that $u$ is a rank-$r$ $q$-center if and only if $x_1 \le A_{v^*_1}$ and $A_{v^*_\ell} \le x_2$. We can therefore convert the retrieval of rank-$r$ $q$-centers into range reporting on 2D points (review Section~\ref{sec:pre}), in the same way as illustrated in Section~\ref{sec:prob2-triangle:range-tri-edges}. Following Section~\ref{sec:pre}, we can create a Chazelle's structure on $n$ points --- each point created for a vertex $u \in V$ in the way explained --- that has $O(n)$ space and, given any $q$, can list the rank-$r$ $q$-centers with an $\tO(1)$ delay. This completes the proof of Theorem~\ref{thm:prob2:l-star}. 

\subsection{$\bm{2\ell}$-Cycles} \label{sec:prob2-constant:cycle} 

We will start with an assumption: all queries specify a fixed $q = (-\infty, \infty)$, namely, there is effectively only one query, which enumerates all the $2\ell$-cycles in $G$. The assumption allows us to explain the core ideas with the minimum technical details and will be removed eventually.


\extraspacing {\bf Queries with $\bm{q = (-\infty, \infty)}$.} Given a $2\ell$-cycle occurrence, we refer to the vertex $u$ in the cycle having the smallest attribute as the occurrence's {\em anchor}. Let $v$ be the vertex in the cycle such that cutting the cycle at $u$ and $v$ gives two $\ell$-paths connecting $u$ and $v$. We will refer to $v$ as the occurrence's {\em inverse anchor}, the pair $(u,v)$ as an {\em anchor pair}, and the two aforementioned paths as {\em cycle $\ell$-paths}. The number of cycle $\ell$-paths is at most $\#P_\ell$ (recall that $\#P_\ell$ is the total number of $\ell$-paths).


\vgap 

The problem may appear deceivingly simple: can't we answer a query by simply concatenating, for each anchor pair $(u,v)$, every two cycle $\ell$-paths from $u$ to $v$? This does not work because the two cycle $\ell$-paths may share common vertices other than $u$ and $v$, in which case the concatenation does not yield a $2\ell$-cycle! This motivates a crucial notion: two cycle $\ell$-paths are {\em interior disjoint} if they (i) have the same anchor pair $(u,v)$, and (ii) do not share any common vertex except $u$ and $v$. Concatenating two cycle $\ell$-paths from $u$ to $v$ spawns a $2\ell$-cycle if and only if those paths are interior disjoint. The challenge we are facing at this moment is the following problem. 

\minipg{\minipgwidth}{
	Design a structure to support the following operation: given a cycle $\ell$-path $\pi$ from anchor $u$ to inverse anchor $v$, list all the cycle $\ell$-paths interior disjoint with $\pi$ with an $\tO(1)$ delay.
}

\noindent We will overcome the challenge with a structure of $\tO(\#P_\ell)$ space. 

\vgap 

Our main observation is that the operation can be converted to range reporting on $(\ell-1)$-dimensional points (review Section~\ref{sec:pre}). To explain, let us consider any cycle $\ell$-path $\pi$ from anchor $u$ to inverse anchor $v$. After excluding $u$ and $v$, the path has $\ell-1$ vertices, which we list as $w_1, w_2, ..., w_{\ell-1}$ in ascending order of attribute\footnote{The order should not be confused with the order by which the vertices  appear in $\pi$.}. Convert $\pi$ into an $(\ell-1)$-dimensional point $(A_{w_1}, ..., A_{w_{\ell-1}})$. Let $P_{u,v}$ be the set of points thus obtained from all the cycle $\ell$-paths with $(u,v)$ as the anchor pair. 

\vgap

Now, consider another cycle $\ell$-path $\pi'$ from $u$ to $v$. List the vertices of $\pi'$ other than $u$ and $v$ as $w'_1, w'_2, ..., w'_{\ell-1}$ also in ascending order of attribute. If $\pi'$ is interior disjoint with $\pi$, each $A_{w'_i}$ ($i \in [1, \ell-1]$) must fall in one of the $\ell$ open intervals: 
\myeqn{
	(-\infty, A_{w_1}), (A_{w_1}, A_{w_2}), ..., (A_{w_{\ell-2}}, A_{w_{\ell-1}}), (A_{w_{\ell-1}}, \infty). \label{eqn:prob2-constant:cycle:intvs} 
}
Therefore, $(A_{w_1'}, ..., A_{w_{\ell-1}'})$ --- the point converted from $\pi'$ --- must fall in one of the following $\ell^{\ell-1} = O(1)$ rectangles: $
	q_1 \times q_2 \times ... \times q_{\ell-1}, 
$
where each $q_i$ ($i \in [1, \ell-1]$) is taken independently from one of the intervals in \eqref{eqn:prob2-constant:cycle:intvs}. As per Section~\ref{sec:pre}, by creating a range tree on $P_{u,v}$ of $\tO(|P_{u,v}|)$ space, we can enumerate all the points in such a rectangle with an $\tO(1)$ delay. 

\vgap

The conclusion from the above is that, for each anchor pair $(u,v)$, we can create a range tree of $\tO(|P_{u,v}|)$ space which, given any cycle $\ell$-path cycle $\pi$ from $u$ to $v$, permits the enumeration of every cycle $\ell$-path $\pi'$, which is interior disjoint with $\pi$, with an $\tO(1)$ delay. The structures of all the anchor pairs use in total $\sum_{\text{anc.\ pair $(u,v)$}} \tO(|P_{u,v}|) = \tO(\#P_\ell)$ space. 

\vgap 

With the challenge conquered, listing all the $2\ell$-cycles becomes an easy matter. We simply look at each cycle $\ell$-path $\pi$, retrieve every $\ell$-path $\pi'$ interior disjoint with $\pi$, and make a cycle by concatenating $\pi$ and $\pi'$. The delay in cycle reporting is $\tO(1)$ (each $2\ell$-cycle can be reported twice). 

\extraspacing {\bf Arbitrary Queries.} Next, we remove the constraint $q = (-\infty, \infty)$ and tackle queries with arbitrary $q$. A new issue now arises: a query can no longer afford to look at all the cycle $\ell$-paths. We say that a cycle $\ell$-path from anchor $u$ to inverse anchor $v$ {\em contributes} to $G_q$ if it makes a $2\ell$-cycle in $G_q$  with another interior disjoint cycle $\ell$-path. We need a way to list only the contributing cycle $\ell$-paths.

\vgap 

Fix any cycle $\ell$-path $\pi$ with anchor pair $(u,v)$. Let $S_\pi$ be the set of $2\ell$-cycles in $G$ that include $\pi$ and have $(u,v)$ as the anchor pair. Take an arbitrary cycle from $S_\pi$. By definition of anchor, $u$ has the smallest attribute among the cycle's vertices. Let $w$ be the vertex in the cycle with the largest attribute. For $q = [x_1, x_2]$, the cycle appears in $G_q$ if and only if $x_1 \le A_u$ and $A_w \le x_2$. Let $w^*$ be the vertex with the smallest attribute among all such $w$'s. It becomes evident that $\pi$ contributes to the $G_q$ of $q = [x_1,x_2]$ if and only if $x_1 \le A_u$ and $A_{w^*} \le x_2$. We can therefore convert the retrieval of contributing cycle $\ell$-paths to range reporting on 2D points, using the method in Section~\ref{sec:prob2-triangle:range-tri-edges}. The resulting structure (a Chazelle's structure) stores a point converted from every cycle $\ell$-path and uses $O(\#P_\ell)$ space. Give any $q$, we can  list the cycle $\ell$-paths contributing to $G_q$ with an $\tO(1)$ delay. 

\vgap

Suppose that we have found a contributing cycle $\ell$-path $\pi$ with anchor pair $(u,v)$. As before, we proceed to find the cycle $\ell$-paths $\pi'$ interior disjoint with $\pi$. The new requirement here, however, is that $\pi'$ needs to be contributing as well. Recall that, in the $q = (-\infty, \infty)$ scenario, we converted the task to range reporting on $(\ell-1)$-dimensional points. To deal with arbitrary $q = [x_1, x_2]$, we will increase the dimension by one. 

\vgap 

To explain, in a fashion like before, let us list out the vertices of $\pi$ --- after excluding $u$ and $v$ --- as $w_1, ..., w_{\ell - 1}$ in ascending order of attribute. Denote by $w_{\text{max}}$ the vertex in $\pi$ with the largest attribute ($w_{\text{max}}$ can be $v$). Convert $\pi$ to an $\ell$-dimensional point $(A_{w_1}, ..., A_{w_{\ell-1}},$ $A_{w_{\text{max}}})$. Let $(A_{w_1'}, ..., A_{w'_{\ell-1}}, A_{w'_{\text{max}}})$ be the point converted from $\pi'$ in the same manner. As we already know $A_u \in [x_1, x_2]$ (recall that $\pi$ is a contributing path), $\pi'$ is a path we want if and only if it satisfies the conditions below:
\myitems{
	\item $A_{w'_i}$ ($1 \le i \le \ell-1$) falls in one of the $\ell$ intervals in \eqref{eqn:prob2-constant:cycle:intvs}; 
	\item $A_{w'_{\text{max}}} \le x_2$. 
}
\noindent Thus, the point $(A_{w_1'}, ..., A_{w'_{\ell-1}}, A_{w'_{\text{max}}})$ must fall in one of the following $\ell^{\ell-1} = O(1)$ rectangles: 
$
	q_1 \times q_2 \times ... \times q_{\ell-1} \times (-\infty, x_2], 
$
where each $q_i$ ($i \in [1, \ell-1]$) is an interval taken independently from \eqref{eqn:prob2-constant:cycle:intvs}. By the above reasoning, for each anchor pair $(u,v)$, we create a set $P_{u,v}$ of $\ell$-dimensional points, each converted from a cycle $\ell$-path with anchor pair $(u,v)$, and then build a range tree on $P_{u,v}$. The range trees of all anchor pairs use  $\sum_{\text{anc.\ pair $(u,v)$}} \tO(|P_{u,v}|) = \tO(\#P_\ell)$ space in total.

\vgap 

We now elaborate on the overall algorithm for answering a (Problem-2) query with parameter $q$. First, enumerate all the cycle $\ell$-paths contributing to $G_q$ with an $\tO(1)$ delay; call this the {\em outer enumeration}. Every time such a path $\pi$ --- say with anchor pair $(u,v)$ --- is obtained, we suspend outer enumeration and utilize the range tree on $P_{u,v}$ to find all the paths $\pi'$ discussed previously with an $\tO(1)$ delay. Upon the delivery of a $\pi'$, concatenate it with $\pi$ and output the $2\ell$-cycle obtained. After exhausting all such $\pi'$, we resume outer enumeration. This concludes the proof of Theorem~\ref{thm:prob2:cycle}.

\appendix 

\section*{Appendix} 

\section{Correctness of the Reduction in Section~\ref{sec:prob1-lb-ub:lb}} \label{app:prob1-lb:correctness}

In our construction, $S_i$ ($i \in [1,s]$) corresponds to two set vertices with attribute values $i$ and $i + s$, respectively. To facilitate derivation, we make a copy of each set: define $S_i = S_{i-s}$ for each $i \in [s+1, 2s]$. In the rest of the proof, we hold the view that each $S_i$ ($i \in [1,2s]$) corresponds to only one set vertex, the one with attribute value $i$. 

\vgap

Consider a wedge occurrence with vertices $u, v$, and $w$ where the edges are $\set{u,v}$ and $\set{v, w}$. We classify it as one of the two types below: 
\myitems{
	\item (type e-s-e) $u$ and $w$ are element vertices and $v$ is a set vertex; 
	\item (type s-e-s) $u$ and $w$ are set vertices and $v$ an element vertex. 
}

\begin{lemma} \label{lmm:lb-wedge:num-each-type}
	For any interval $q = [x, y]$ satisfying $1 \leq x < s + 1/2 < y\leq 2s$, we have
	\myitems{
		\item the number of e-s-e wedges in $G_q$ is $\sum_{i\in [x, y]}{|S_i|\choose 2}$;
		\item the number of s-e-s wedges in $G_q$ is $\sum_{i\in [x, y], j \in [i+1, y]}|S_i \cap S_j|$.
	} 
\end{lemma}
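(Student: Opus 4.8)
The plan is to count the two types of wedges in $G_q$ separately, using the structure of the graph $G$ built in the reduction. Recall that in $G$, the set vertex with attribute value $i \in [1, 2s]$ is adjacent exactly to the element vertices $e$ with $e \in S_i$, and conversely an element vertex $e$ is adjacent exactly to the set vertices with attribute values $i$ such that $e \in S_i$. Also recall that every element vertex has attribute value $s + 1/2$, so for the given interval $q = [x, y]$ with $1 \le x < s + 1/2 < y \le 2s$, we have $V_q$ consisting of all element vertices together with all set vertices whose attribute value $i$ lies in $[x, y]$ (equivalently $i \in \{\lceil x \rceil, \ldots, \lfloor y \rfloor\}$; I will write $i \in [x,y]$ for integer $i$). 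A wedge in $G$ uses a center vertex $v$ and two distinct endpoints $u, w$ adjacent to $v$. Since $G$ is bipartite between set vertices and element vertices, every edge joins a set vertex and an element vertex; hence the center $v$ is either a set vertex (giving type e-s-e: both endpoints are element vertices) or an element vertex (giving type s-e-s: both endpoints are set vertices). This dichotomy is exhaustive, so it suffices to count each type.

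\textbf{Type e-s-e.} Here the center is a set vertex. A set vertex with attribute value $i$ survives in $G_q$ iff $i \in [x, y]$. Its neighbors are the element vertices $\{e : e \in S_i\}$, all of which survive in $G_q$ since their attribute value $s+1/2 \in q$. A type-e-s-e wedge centered at this vertex corresponds to an unordered pair of distinct such neighbors, and there are $\binom{|S_i|}{2}$ of them. Summing over all surviving set vertices gives $\sum_{i \in [x,y]} \binom{|S_i|}{2}$. Note there is no double counting because the center of a wedge is uniquely determined as the unique degree-2 vertex among the three, and distinct set vertices give distinct wedges.

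\textbf{Type s-e-s.} Here the center is an element vertex $e$, which always survives in $G_q$. Its neighbors are the set vertices with attribute values $i$ such that $e \in S_i$; among these, the ones surviving in $G_q$ are those with $i \in [x,y]$. A type-s-e-s wedge centered at $e$ corresponds to an unordered pair $\{i, j\}$ of distinct such values, i.e. $i, j \in [x, y]$ with $e \in S_i \cap S_j$; I will count ordered pairs with $i < j$, i.e. $i \in [x,y]$, $j \in [i+1, y]$. For a fixed ordered pair $(i, j)$ with $i < j$ in $[x,y]$, the number of element vertices $e$ adjacent to both surviving set vertices is exactly $|S_i \cap S_j|$. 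Summing, the total number of type-s-e-s wedges is $\sum_{i \in [x,y]}\sum_{j \in [i+1, y]} |S_i \cap S_j|$. Again there is no double counting: the center is the unique degree-2 vertex, here an element vertex, and an unordered pair of distinct set vertices together with a common element neighbor determines the wedge.

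The only subtlety to be careful about — and the step I would spell out most explicitly — is the bookkeeping with the duplicated sets: the construction really has $2s$ set vertices, and the reduction in Section~\ref{sec:prob1-lb-ub:lb} defines $S_i := S_{i-s}$ for $i \in [s+1, 2s]$, so that $S_i \cap S_j$ in the lemma statement refers to these possibly-duplicated sets. Since the two set vertices for the same original set have attribute values $i$ and $i+s$, they are distinct vertices and a wedge may well have one endpoint of attribute $i \le s$ and the other of attribute $i + s$; such a wedge is counted in the sum with the pair $(i, i+s)$, which is legitimate since $i \ne i+s$. So no collapsing occurs and the formulas hold verbatim. Beyond this, the argument is a routine case analysis on the bipartite structure; I do not anticipate a real obstacle, only the need to state the uniqueness-of-center observation cleanly so that the sums count each wedge exactly once.
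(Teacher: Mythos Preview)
Your proof is correct and follows essentially the same approach as the paper's own argument: both count e-s-e wedges by fixing the central set vertex and choosing two element neighbors, and count s-e-s wedges by pairing two set vertices $i<j$ in $[x,y]$ and counting common elements. The paper phrases this via an explicit bijection with ``tuples'' $(e_1,S_i,e_2)$ and $(S_i,e,S_j)$, while you count by center and then switch the order of summation for the s-e-s case; these are the same argument, and your added remarks on bipartiteness, uniqueness of the center, and the duplicated sets $S_{i+s}=S_i$ make the bookkeeping more explicit than the paper does.
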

\begin{proof}
	To prove the first bullet, define an {\em e-s-e tuple} as $(e_1, S_i, e_2)$ where $i \in q$ and $e_1$ and $e_2$ are distinct elements in $S_i$. The number of such tuples is $\sum_{i=x}^{y} \binom{|S_i|}{2}$. Our construction ensures a one-one correspondence between e-s-e tuples and e-s-e wedges in $G_q$.  
	
	\vgap
	
	To prove the second bullet, define an {\em s-e-s tuple} as $(S_i, e, S_j)$ where $x \le i < j \le y$ and $e \in S_i \cap S_j$. The number of such tuples is $\sum_{i\in [x, y], j \in [i+1, y]}|S_i \cap S_j|$. Our construction ensures a one-one correspondence between s-e-s tuples and s-e-s wedges in $G_q$.
\end{proof}

To find out whether $S_a \cap S_b$ is empty, our reduction issues four Problem-1 queries with intervals $q_1 = [a, s+b]$, $q_2 = [a+1, s+b]$, $q_3 = [a, s+b-1]$, and $q_4 = [a+1, s+b-1]$, respectively. The above lemma is applicable to all these intervals. For $i \in [1, 4]$, let $c_i'$ (resp.\ $c_i''$) be the number of e-s-e (resp.\ s-e-s) wedges in $G_{q_i}$; this means that $c_i$, the total number of wedges in $G_{q_i}$, equals $c_i' + c_i''$. According to Lemma~\ref{lmm:lb-wedge:num-each-type}, we have:
\myeqn{
	&& c_1' - c_2' - c_3' + c_4' \nn \\
	&=& \sum_{i\in [a, s+b]}{|S_i|\choose 2} - \sum_{i\in [a+1, s+b]}{|S_i|\choose 2} 
	- \sum_{i\in [a, s+b-1]}{|S_i|\choose 2} 
	+ \sum_{i\in [a+1, s+b-1]}{|S_i|\choose 2} \nn \\
	&=& 0\nn
}
and 
\myeqn{
	&& c_1'' - c_2'' - c_3'' + c_4'' \nn \\
	&=& \left( \sum_{\substack{i \in [a, s+b]\\j \in [i+1,s+b]}}|S_i \cap S_j| - 
	\sum_{\substack{i \in [a+1, s+b]\\ j \in [i+1, s+b]}}|S_i \cap S_j| \right) - \nn \\
	&& \left(\sum_{\substack{i \in [a, s+b-1]\\ j \in [i+1, s+b-1]}}|S_i \cap S_j| - 
	\sum_{\substack{i \in [a+1, s+b-1]\\ j \in [i+1, s+b-1]}}|S_i \cap S_j| \right) \nn 
}
\myeqn{
	&=& \sum_{j \in [a+1, s+b]} |S_a \cap S_j| - \sum_{j\in [a+1, s+b-1]}|S_a \cap S_j| \nn \\
	&=& |S_a \cap S_{s+b}| \nn \\ 
	&=& |S_a \cap S_b|. \nn
}
We thus conclude that $c_1 - c_2 - c_3 + c_4 = |S_a \cap S_b|$.

\section{Proof of Lemma~\ref{lmm:prob1-wedge:color}} \label{app:prob1-wedge:color}

Let us first consider a variant of the set disjointness problem. 

\minipg{\minipgwidth}{
	{\bf Weighted Set Intersection Size.} We have $s \ge 2$ sets $S_1,$ $S_2,$ ..., $S_s$. Each $S_i$ ($i \in [1, s]$) is associated with a function $\weight_{S_i}$ which assigns to each element $e \in S_i$ a value $\weight_{S_i}(e)$. Given distinct set ids $a, b \in [1,s]$, a query returns 
	\myeqn{
		\size(S_a, S_b) 
		= 
		\sum_{e \in S_a \cap S_b} \weight_{S_a}(e) \cdot \weight_{S_b}(e). 
		\label{eqn:prob1-wedge:color:intr-weight} 
	}
}

Let $N = \sum_{i=1}^s |S_i|$. For any $\lambda \in [1, \sqrt{N}]$, it is straightforward to build a structure of $O(N^2 / \lambda^2)$ space answering a query in $O(\lambda)$ time. Call $S_i$ ($i \in [1, s]$) a {\em large} set if $|S_i| > \lambda$, or a {\em small} set otherwise. The number of large sets is at most $N / \lambda$. For each pair $(i, j) \in [1, s] \times [1, s]$, $i \ne j$, such that $S_i$ and $S_j$ are both large, we store $ \size(S_i, S_j)$; the space needed is $O(N^2/\lambda^2)$. Given a query with parameters $a$ and $b$, return $ \size(S_a, S_b)$ directly if $S_a$ and $S_b$ are both large. Otherwise, assume, w.l.o.g., that $S_a$ is small. We compute $S_a \cap S_b$ in $O(\lambda)$ time using a hash table (for each $e \in S_a$, check if $e \in S_b$). The result $\size(S_a, S_b)$ can then be obtained easily. 

\vgap 

Equipped with the above, next we describe a structure for the colored range wedge counting problem to prove Lemma~\ref{lmm:prob1-wedge:color}. 

\extraspacing {\bf Structure.} First obtain a canonical collection $\C$ of $V$ (defined in Section \ref{sec:prob1-wedge}) satisfying $\sum_{U \in \C} |U| = \tO(n)$. For each $U \in \C$ --- recall that $U$ is a subset of $V$ --- construct a weighted set as follows:
\myitems{
	\item $S_U =$ the set of black vertices adjacent to at least one vertex in $U$;
	\item for each $b \in S_U$, $\weight_{S_U}(b) =$ the number of vertices in $U$ adjacent to  $b$. 
}
These weighted sets constitute an instance of the weighted set intersection size problem. Build a structure described earlier on the instance using the given parameter $\lambda$. The lemma below implies that the structure occupies $\tO(m^2/\lambda^2)$ space. 

\begin{lemma} \label{lmm:prob1-wedge:color:space}
	$\sum_{U \in \C} |S_U| = \tO(m)$. 
\end{lemma}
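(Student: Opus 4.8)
\textbf{Proof plan for Lemma~\ref{lmm:prob1-wedge:color:space}.} The goal is to bound $\sum_{U \in \C} |S_U|$, where $S_U$ is the set of black vertices adjacent (in $G$) to at least one vertex of $U$. The natural idea is to charge each membership ``$b \in S_U$'' to an edge of $G$ and a member of $\C$, and then invoke the two defining properties of a canonical collection. Concretely, if $b \in S_U$, then there is at least one edge $\{u, b\} \in E$ with $u \in U$; pick one such edge as a witness. This gives an injection from the pairs $(b, U)$ with $b \in S_U$ into the set of triples (edge $\{u,b\}$, subset $U \in \C$ with $u \in U$).

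The first step is therefore to count, for a fixed edge $\{u, b\}$, how many subsets $U \in \C$ can arise as witnesses for it. A witness subset $U$ must contain the endpoint $u$ (the non-black endpoint, or either endpoint if both are black). By Property P4-1, each vertex of $V$ — in particular $u$ — lies in only $\tO(1)$ subsets of $\C$. Hence each edge can serve as a witness for at most $\tO(1) \cdot 2 = \tO(1)$ pairs $(b, U)$ (the factor $2$ accounting for the choice of which endpoint plays the role of $u$). The second step is to sum over all $m$ edges: $\sum_{U \in \C} |S_U| \le \sum_{\{u,b\} \in E} \tO(1) = \tO(m)$, which is exactly the claimed bound.

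I do not expect a serious obstacle here; the only point requiring a little care is the bookkeeping of the witness map — making sure that distinct pairs $(b, U)$ map to distinct (edge, subset) triples, so that no overcounting occurs. That follows because $U$ is recorded in the triple and $b$ is recovered as the black endpoint of the recorded edge (and if both endpoints are black one simply orients the witness edge consistently, e.g.\ by attribute value, so that $b$ is still determined). Once this injectivity is in place, the Property~P4-1 bound and summation over $E$ finish the argument. An alternative, essentially equivalent, route is to write $\sum_{U \in \C} |S_U| \le \sum_{U \in \C} \sum_{u \in U} \deg(u) = \sum_{u \in V} \deg(u) \cdot |\{U \in \C : u \in U\}| \le \tO(1) \sum_{u \in V} \deg(u) = \tO(m)$, using P4-1 in the last inequality; this avoids witness-map subtleties entirely and is the cleanest way to present the proof.
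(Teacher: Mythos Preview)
Your proposal is correct and your witness-charging argument is essentially identical to the paper's ``pay a dollar to the edge'' proof: both charge each pair $(b,U)$ with $b\in S_U$ to an incident edge and then use Property~P4-1 to bound the total charges on each edge by $\tO(1)$. Your alternative degree-sum computation $\sum_{U}|S_U|\le \sum_{U}\sum_{u\in U}\deg(u)=\sum_{u}\deg(u)\cdot|\{U:u\in U\}|=\tO(m)$ is a cleaner packaging of the same idea and would be a perfectly good replacement.
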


\begin{proof}
	Each $b \in S_U$ is adjacent to a vertex $u \in U$. Pay a dollar to the edge $\{b, u\}$  for each such pair $(b, u)$. Since an edge can receive a dollar only if it has a vertex in $U$, it can receive up to two dollars\footnote{Two is possible: this happens when $b$ and $u$ are both black and both appear in $U$.}. $|S_U|$ is no more than the number of dollars paid. Do the above for all $U \in \C$. Each edge in $G$ can receive $\tO(1)$ dollars in total because every vertex appears in $\tO(1)$ subsets in $\C$ (Property P4-1 of $\C$; see Section~\ref{sec:prob1-wedge}). 
\end{proof}

For any distinct $U, U' \in \C$, define $\size(S_{U}, S_{U'})$ as in \eqref{eqn:prob1-wedge:color:intr-weight}. On the other hand, for each $U \in \C$, define 
\myeqn{
	\size(S_U, S_U) 
	&=&
	\sum_{b \in S_U} {\weight_{S_U}(b) \choose 2}. \nn 
}
We store the value $\size(S_U,S_U)$ for all $U$. The total space is $\tO(m^2/\lambda^2)$.

\vgap 

Before proceeding, the reader should note the following subtle fact about the function $\size(.,.)$:

\minipg{\minipgwidth}{
	{\bf Fact B-1:} $\size(S_{U},S_{U'})$ is the number of occurrences $\wedge(u,v,w)$ in $G$ such that $u \in S_{U}$, $w \in S_{U'}$, and $v$ is black. 
}

\noindent The fact holds even if $U = U'$.

\extraspacing {\bf Query.} Given a query with interval $q$, in $\tO(1)$ time we can pick $h = \tO(1)$ members $U_1, ...,U_h$ from $\C$ that form a partition of $V_q$ (Property P4-2 of $\C$). The query returns
\myeqn{
	\sum_{i, j \in [1, h]: i \le j} \size(S_{U_i}, S_{U_j}). 
	\label{eqn:prob1-wedge:color:qry}
}
Each $\size(S_{U_i}, S_{U_j})$ is either explicitly stored or can be obtained from the weighted set intersection size structure in $O(\lambda)$ time. The overall query time is therefore $\tO(\lambda)$.

\vgap 

Fact B-1 and $U_1, ...,U_h$ forming a partition of $V_q$ assure us that \eqref{eqn:prob1-wedge:color:qry} counts only occurrences $\wedge(u,v,w)$ in $G$ such that $A_u \in q$, $A_w \in q$, and $v$ is black. To complete the correctness argument, we still need to show that \eqref{eqn:prob1-wedge:color:qry} counts every such occurrence exactly once. Indeed, there exist unique $a, b \in [1, h]$ such that $a \le b$, $u \in U_a$, and $w \in U_b$. The wedge is counted only by the term in \eqref{eqn:prob1-wedge:color:qry} with $i = a$ and $j = b$.
\section{Proof of Lemma~\ref{lmm:prob2-arbitrary:agm-gen}} \label{app:prob2-arbitrary:agm-gen} 


Let us first review H\"older's Inequality. Fix some positive integers $\alpha$ and $\beta$. Let
\myitems{
	\item $x_{i, j}$, for each $i \in [1, \alpha]$ and $j\in [1, \beta]$, be non-negative real numbers; 
	\item $y_j$, for each $j \in [1, \beta]$, be non-negative real numbers satisfying $\sum_{j=1}^\beta y_j \geq 1$.
}
Under the convention $0^0 = 0$,  H\"older's inequality states that:
\myeqn{
	\sum_{i=1}^{\alpha}	\prod_{j=1}^{\beta} x_{i, j}^{y_j} \leq 	\prod_{j=1}^{\beta} \left(\sum_{i=1}^{\alpha} x_{i, j}\right)^{y_j}.	\label{eqn:holder}
}
A proof can be found in \cite{f04}. 


\extraspacing 

We now 
return to the context of Lemma~\ref{lmm:prob2-arbitrary:agm-gen}.
Given any $j\in [1, d-1]$ and $(I_1, I_2, ..., I_j)\in \I_1 \times ... \I_j$, we will use $B(I_1, I_2, ..., I_j)$ as a short-form for the $d$-dimensional box
\myeqn{
	B(I_1, ..., I_j, \dom(X_{j+1}), ..., \dom(X_d)). \nn
}
As a special case, define $B(\emptyset) = B(\dom(X_1)$, $..., \dom(X_d))$.

\begin{lemma} \label{lmm:prob2-arbitrary:agm-gen:help} 
	For any $j\in [1, d]$, we have
	\myeqn{
		\sum_{I_j \in \I_j} \prod_{e \in \E} |R_e \ltimes B(I_1, ..., I_j)|^{W(e)} 
		\le
		\prod_{e \in \E} |R_e \ltimes B(I_1, ..., I_{j-1})|^{W(e)}. \nn
	}
\end{lemma}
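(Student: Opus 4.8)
\textbf{Proof plan for Lemma~\ref{lmm:prob2-arbitrary:agm-gen:help}.} The plan is to derive this inequality as a direct application of H\"older's inequality \eqref{eqn:holder}, with the ``outer'' index $i$ ranging over the disjoint intervals in $\I_j$ and the ``inner'' index $j$ (in the notation of \eqref{eqn:holder}) ranging over the hyperedges $e \in \E$, using the weights $y_e = W(e)$. The key point that makes H\"older applicable is that $W$ is a fractional edge covering, so for the attribute $X_j$ we have $\sum_{e \in \E : X_j \in e} W(e) \ge 1$; edges $e$ with $X_j \notin e$ contribute a factor that does not depend on the choice of $I_j \in \I_j$ and can be pulled out of the sum, so only the edges containing $X_j$ matter for the H\"older step, and for exactly those edges the weights sum to at least $1$ as required.

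First I would fix $I_1, \dots, I_{j-1}$ and split $\E$ into $\E_{\text{in}} = \{e \in \E : X_j \in e\}$ and $\E_{\text{out}} = \E \setminus \E_{\text{in}}$. For $e \in \E_{\text{out}}$, the box constraint on $R_e$ coming from coordinate $j$ is vacuous, so $|R_e \ltimes B(I_1,\dots,I_j)| = |R_e \ltimes B(I_1,\dots,I_{j-1})|$ regardless of $I_j$; hence $\prod_{e \in \E_{\text{out}}} |R_e \ltimes B(I_1,\dots,I_j)|^{W(e)}$ is a constant that factors out of $\sum_{I_j \in \I_j}$. Then I would apply \eqref{eqn:holder} with $\alpha = |\I_j|$, $\beta = |\E_{\text{in}}|$, $x_{I_j, e} = |R_e \ltimes B(I_1,\dots,I_j)|$, and $y_e = W(e)$ (valid since $\sum_{e \in \E_{\text{in}}} W(e) \ge 1$), obtaining
\myeqn{
	\sum_{I_j \in \I_j} \prod_{e \in \E_{\text{in}}} |R_e \ltimes B(I_1,\dots,I_j)|^{W(e)}
	\le
	\prod_{e \in \E_{\text{in}}} \left( \sum_{I_j \in \I_j} |R_e \ltimes B(I_1,\dots,I_j)| \right)^{W(e)}. \nn
}

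The remaining step is to observe that, because the intervals in $\I_j$ are pairwise disjoint, the sets $R_e \ltimes B(I_1,\dots,I_j)$ over $I_j \in \I_j$ are pairwise disjoint subsets of $R_e \ltimes B(I_1,\dots,I_{j-1})$; therefore $\sum_{I_j \in \I_j} |R_e \ltimes B(I_1,\dots,I_j)| \le |R_e \ltimes B(I_1,\dots,I_{j-1})|$ for every $e \in \E_{\text{in}}$. Substituting this bound and reattaching the constant factor for $\E_{\text{out}}$ yields the claimed inequality. I do not expect a serious obstacle here; the only thing to be careful about is the bookkeeping of which hyperedges contain $X_j$ (so that the H\"older hypothesis $\sum y \ge 1$ genuinely holds), and the $0^0 = 0$ convention, which is already built into the statement of \eqref{eqn:holder}. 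Finally, Lemma~\ref{lmm:prob2-arbitrary:agm-gen} itself follows by applying Lemma~\ref{lmm:prob2-arbitrary:agm-gen:help} iteratively for $j = d, d-1, \dots, 1$, telescoping down to $\prod_{e \in \E} |R_e \ltimes B(\emptyset)|^{W(e)} = \prod_{e \in \E} |R_e|^{W(e)}$.
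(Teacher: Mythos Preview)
Your proposal is correct and follows essentially the same approach as the paper's proof: split $\E$ into the edges containing $X_j$ and those that do not, factor out the latter (which are independent of $I_j$), apply H\"older's inequality to the former using that $\sum_{e \ni X_j} W(e) \ge 1$, and then use disjointness of the intervals in $\I_j$ to bound each inner sum by $|R_e \ltimes B(I_1,\dots,I_{j-1})|$. The paper's $\E_j$ is exactly your $\E_{\text{in}}$, and the remaining steps match.
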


\begin{proof}
	Define 
	\myeqn{
		\E_j = \{e\in \E \mid X_j \in e \}. \nn
	}
	Since $\sum_{e\in \E_j}W(e) \geq 1$ ($W$ is a fractional edge covering), from H\"older's inequality \eqref{eqn:holder} we have
	\myeqn{
		&& \sum_{I_j \in \I_j}\prod_{e\in \E_j}|R_e \ltimes B(I_1, ..., I_j)|^{W(e)}  \nn \\
		&\le& \prod_{e\in \E_j} 
		\Big(\sum_{I_j \in \I_j}|R_e \ltimes B(I_1, ..., I_j)|\Big)^{W(e)}
		\nn \\
		&\le& 
		\prod_{e\in \E_j} \Big|R_e \ltimes B\big(I_1, ..., I_{j-1}, \dom(X_j) \big)\Big|^{W(e)} \nn \\
		&=& 
		\prod_{e \in \E_j} |R_e \ltimes B(I_1, ..., I_{j-1})|^{W(e)}  
		\label{eqn:prob2-arbitrary:agm-gen:decom1}
	}
	where the second inequality used the fact that $\I_j$ is a set of disjoint intervals in $\dom(X_j)$.
	
	\vgap 
	
	For each $e \in \E \setminus \E_j$, $R_e \ltimes B(I_1, ..., I_j)$ does not depend on $I_j$ and can be rewritten as $R_e \ltimes B(I_1, ..., I_{j-1})$. We can thus derive: 
	\myeqn{
		&& \hspace{-7mm} \sum_{I_j \in \I_j} \prod_{e \in \E} |R_e \ltimes B(I_1, ..., I_j)|^{W(e)} 
		\nn \\
		&& 
		\hspace{-10mm} =
		\sum_{I_j \in \I_j} 
		\Big( \prod_{e \in \E\setminus \E_j} |R_e \ltimes B(I_1, ..., I_{j})|^{W(e)} \cdot  
		\prod_{e\in \E_j}|R_e \ltimes B(I_1, ..., I_j)|^{W(e)} \Big) \nn \\
		&& 
		\hspace{-10mm} =
		\prod_{e \in \E\setminus \E_j} |R_e \ltimes B(I_1, ..., I_{j})|^{W(e)} 
		\cdot
		\sum_{I_j \in \I_j}\prod_{e\in \E_j}|R_e \ltimes B(I_1, ..., I_j)|^{W(e)}  \nn \\
		&& 
		\hspace{-10mm} \le
		\prod_{e \in \E\setminus \E_j} |R_e \ltimes B(I_1, ..., I_{j-1})|^{W(e)} \cdot  
		\prod_{e \in \E_j} |R_e \ltimes B(I_1, ..., I_{j-1})|^{W(e)} 
		\nn \\ 
		&& 
		\hspace{-10mm} =
		\prod_{e \in \E} |R_e \ltimes B(I_1, ..., I_{j-1})|^{W(e)}. \nn
	}
	where the inequality used \eqref{eqn:prob2-arbitrary:agm-gen:decom1}.
\end{proof}

We can prove Lemma~\ref{lmm:prob2-arbitrary:agm-gen} with $d$ applications of Lemma~\ref{lmm:prob2-arbitrary:agm-gen:help}:
\myeqn{
	&& \sum_{I_1 \in \I_1}...\sum_{I_d \in \I_d}
	\prod_{e \in \E} |R_e \ltimes B(I_1, ..., I_d)|^{W(e)} \nn \\
	&\leq & 
	\sum_{I_1 \in \I_1}...\sum_{I_{d-1} \in \I_{d-1}}
	\prod_{e \in \E} |R_e \ltimes B(I_1, ..., I_{d-1})|^{W(e)} \nn \\
	&\leq & 
	\sum_{I_1 \in \I_1}...\sum_{I_{d-2} \in \I_{d-2}}
	\prod_{e \in \E} |R_e \ltimes B(I_1, ..., I_{d-2})|^{W(e)} \nn \\
	&\leq& ... \nn \\
	&\leq & 
	\sum_{I_1 \in \I_1} 
	\prod_{e \in \E} |R_e \ltimes B(I_1)|^{W(e)} \nn \\
	&\le& 
	\prod_{e \in \E} |R_e|^{W(e)}. \nn
}

\section{Proof of Theorem~\ref{thm:prob2:gen}} \label{app:prob1-arbitrary} 


The reader should read this proof after having finished Section~\ref{sec:prob2-arbitrary}. The basic idea is to convert Problem 2 to range join. Let $\X$ (resp.\ $\E$) be the set of vertices (resp.\ edges) in the pattern graph $Q$. The reader should not confuse $\X$ and $\E$ with $V$ and $E$: the latter two are defined on the data graph $G$. For each edge $e \in \E$, construct a relation $R_e$ with two attributes by inserting, for each edge $\set{u,v}$ in $G$, two tuples $(u,v)$ and $(v,u)$. This defines a join instance $\R = \set{R_e \mid e \in \E}$ with input size $N = 2m \cdot |\E| = O(m)$. 

\vgap 

Every occurrence of $Q$ corresponds to a constant number of tuples in $\join(\R)$. Motivated by this, given a Problem-2 query with interval $q$, we issue a range join query on $\R$ with $q$, which guarantees retrieving all the occurrences. The issue, however, is that not every tuple in $\join(\R)$ gives rise to an occurrence. To see this, consider $Q =$ 4-cycle and, hence, $\R$ has four relations with schemes $(X_1, X_2)$, $(X_2, X_3)$, $(X_3, X_4)$, and $(X_4, X_1)$, respectively. Let $\set{u,v}$ be an arbitrary edge in $E$; tuples $(u,v)$, $(v,u)$, $(u,v)$, and $(v,u)$ exist in the four relations, respectively. Thus, $\join(\R)$ contains a tuple $(u,v,u,v)$ that does not correspond to any occurrence.

\vgap 

The issue can be eliminated by slightly modifying the structure of \cite{dk18}, which we review next. Consider an arbitrary set $\R$ of relations (with any number of attributes) defined in Section~\ref{sec:prob2-arbitrary}. Deep and Koutris \cite{dk18} proved the existence of a set $\B$ of boxes such that: 
\myitems{
	\item each box has the form $B(I_1, ..., I_d)$ where $I_i$ is an interval in $\dom(X_i)$ for $i \in [1, d]$; 
	\item the boxes are disjoint and their union is $B(\dom(X_1),$ $\dom(X_2),$ ...,$ \dom(X_d))$;
	\item for each box $B(I_1, ..., I_d)$, the join instance $\R_{I_1, ..., I_d}$ has a non-empty result;
	\item each box $B(I_1, ..., I_d)$ satisfies $\agm(I_1, ..., I_d) \le \Delta$; 
	\item $|\B| = O(N^{\rho^*}/\Delta)$.
}
The structure of \cite{dk18} simply stores $\B$ itself and uses $O(N^{\rho^*}/\Delta)$ space\footnote{Obviously, the relations of $\R$ also need to be stored separately.}. To enumerate $\join(\R)$, the algorithm of \cite{dk18} looks at each $B(I_1, ..., I_d) \in \B$ and applies a worst-case optimal join algorithm \cite{nprr18,nrr13,v14} to compute $\join(\R_{I_1, ..., I_d})$ in $\tO(\agm(I_1, ..., I_d)) = \tO(\Delta)$ time. This guarantees a delay of $\tO(\Delta)$. 

\vgap

We now adapt the structure to list all the occurrences of $Q$ in $G$ (fixing $q = (-\infty, \infty)$). Construct $\R$ from $G$ and $Q$ as before. Apply \cite{dk18} to find a set $\B$ with all the properties explained earlier. Then, inspect each box $B(I_1, ..., I_d) \in \B$ in turn and remove it from $\B$ if all the occurrences of $Q$ producible from $\join(\R_{I_1,...,I_d})$ can already be produced from the boxes inspected earlier. The size of $\B$ can only decrease and therefore is still bounded by $O(N^{\rho^*}/\Delta)$. To find the occurrences, apply a worst-case optimal join algorithm on each box in $\B$. As each box generates at least one new occurrence, we guarantee a delay of $\tO(\Delta)$. 

\vgap 

To support (Problem-2) queries with arbitrary $q$, use the adapted structure to replace that of \cite{dk18} in the solution presented in Section~\ref{sec:prob2-arbitrary:range-join}. All the analysis still holds through. We thus complete the proof of Theorem~\ref{thm:prob2:gen}.

\bibliographystyle{plainurl}
\bibliography{ref}



\end{document}